\newtheorem{theorem}{Theorem}
\newtheorem{lemma}{Lemma}
\newtheorem{definition}{Definition}
\newenvironment{proof}[1][Proof]{\textbf{#1.} }{\ \rule{0.5em}{0.5em}}
\DeclareMathOperator{\Tr}{Tr}
\newcommand{\ket}[1]{ | \, #1  \rangle}
\newcommand{\bra}[1]{ \langle #1 \,  |}
\newcommand{\proj}[1]{\ket{#1}\bra{#1}}
\begin{document}

\title{Properties of dimension witnesses and their semi-definite programming relaxations}

\author{Piotr Mironowicz}
\email{piotr.mironowicz@gmail.com}
\affiliation{Department of Algorithms and System Modelling, Faculty of Electronics, Telecommunications and Informatics, Gda\'{n}sk University of Technology, Gda\'{n}sk 80-233, Poland}
\affiliation{National Quantum Information Centre in Gda\'{n}sk, Sopot 81-824, Poland}

\author{Hong-Wei Li}
\email{lihw@mail.ustc.edu.cn}
 \affiliation{Key Laboratory of Quantum Information,University of Science and Technology of China, Hefei, 230026, China}
 \affiliation{Zhengzhou Information Science and Technology Institute, Zhengzhou, 450004, China}

\author{Marcin Paw\l{}owski}
\email{dokmpa@univ.gda.pl}
\affiliation{Institute of Theoretical Physics and Astrophysics, University of Gda\'{n}sk, 80-952 Gda\'{n}sk, Poland}

\date{\today{}}
\keywords{semi-definite programming, dimension witnesses, random number generation, random number expanders, min-entropy, Bell inequalities}%

\begin{abstract}
In this paper we develop a method for investigating semi-device-independent randomness expansion protocols that was introduced in [Li {\it et al.} Phys. Rev. A \textbf{87}, 020302(R) (2013)]. This method allows to lower-bound, with semi-definite programming, the randomness obtained from random number generators based on dimension witnesses. We also investigate the robustness of some randomness expanders using this method. We show the role of an assumption about the trace of the measurement operators and a way to avoid it. The method is also generalized to systems of arbitrary dimension, and for a more general form of dimension witnesses, than it the previous paper. Finally, we introduce a procedure of dimension witness reduction, which can be used to obtain from an existing witness a new one with higher amount of certifiable randomness. The presented methods finds an application for experiments [Ahrens {\it et al.} Phys. Rev. Lett. {\bf 112}, 140401 (2014)].
\end{abstract}

\maketitle

\section{Introduction}

Nowadays information is one of the most important resources. We can defeat the enemy in a war just manipulating his data. If we can guess the mechanism of the generation of (pseudo-)random numbers used by a casino, then we can efficiently cheat in gambling\cite{Mitnick}. However, most of the so-called random number generators has a deterministic algorithm inside. It is very difficult to develop a reliable pseudo-random number generation (PRNG) method. Although there are tests\cite{NIST80022} that allow to check whether a sequence of numbers conforms to a particular probability distribution, we can never be sure its security without the knowledge how the sequence was generated.

One of the measures of randomness is so-called min-entropy\cite{OperMinEn}. In particular, in the context of authentication, min-entropy is the probability of guessing the easiest key in a given distribution of keys\cite{NIST800632}. If we know the pseudo-random generating algorithm and the initial seed (or some sequence of generated numbers), then the randomness of such a source is equal to zero. All classical PRNGs have this drawback.

On the other hand, quantum physics confuses philosophers with randomness on its deepest level. This randomness is unavoidable. We know that if certain observables (\emph{Bell operators}, which are linear functions of observed probabilities occurring in the experiment) attain certain thresholds, then the process must be intrinsically random, or we would have to abandon some ideas that are fundamental to all physical theories. Thus, values of these observables guarantee that the results of performed measurements are indeed random, no matter how does the measuring apparatus work.

This way the idea of the quantum randomness certification emerged\cite{RNGCBT}. If we want to be sure that the device we are using does really produce random numbers, we perform \textit{Bell experiment}, which is a kind of self testing. Such an experiment involves at least two separated parties that perform subsequent measurements with different settings without any communication between them. After series of such measurements, the collected data is used to estimate the joint probabilities of the outcomes conditioned on the settings used. The most prominent example of Bell operator is so-called Clauser-Horne-Shimony-Holt (CHSH)\cite{CHSH}.

Because such self testing works independently of the internal workings of the device used (in particular, the exact form of the performed measurements is not important), if the Bell inequality attains some value, we are sure that the generated results are indeed random, even if the device has been construed by a malevolent party. The amount of the obtained secure randomness is precisely quantified by means of min-entropy\cite{RNGCBT,ColPHD,CK11}. This approach, in which we do not trust the vendor of our devices and draw conclusions only from the observed results, is called \textit{device-independent}\cite{DI}, referred further as \textit{DI}.

Still, Bell experiments are very difficult to do. They require a high degree of precision and extremely high detection efficiencies. So far loophole-free Bell experiment has not been successfully performed. But when we allow to send a state from one part of the device to another, then we do not have any non-locality, which is crucial for that way of certification. It was shown that, if we can bound the dimension of the communicated system, we still may use this \textit{prepare and measure scheme} to certify the randomness\cite{DW1}. Since we have to know something about the construction of the device, this approach is called \textit{semi-device-independent}\cite{SDI,SDI_effects} (denoted hereafter \textit{SDI}). This offers a good compromise between security and experimental feasibility.

Currently all commercial quantum random generators are based on the prepare and measure scheme, \textit{e.g.} the \textit{id~Quantique}'s\cite{IDQ} device \textit{Quantis}, or the \textit{qStream} by \textit{Quintessence Labs}\cite{QLabs}. These devices do not perform any self testing, so we are forced to trust their vendors. For this reason, methods for certifying randomness in the prepare and measure scheme with the semi-device-independent approach should be investigated. In this framework analogs of Bell inequalities, called \textit{dimension witnesses} \cite{DW1,DW4,DW2,DW3,DW5}, are used.

Before we proceed we should stress that what we call random number generation is in fact randomness \textit{expansion}, the process that starts with some amount of initial randomness and uses it to obtain more of it. The presented self testing procedure of the device also requires some amount of randomness (in order to choose the measurement settings in rounds of testing experiments). Strictly speaking, all quantum random number generators that use Bell inequalities or dimension witnesses to certify the randomness are randomness expanders.

After generation of a string of bits with a certain amount of min-entropy, it is possible to \textit{extract} its randomness what means using a certain algorithm to produce a shorter string with a larger min-entropy per bit\cite{Trevisan01,DPVR09,TRSS10}.

In our previous paper \cite{HWL13} we have investigated the relation between random number expansion protocols based on correlations occurring in the scenario where two parties share an entangled state, and on protocols relying on the prepare and measure scheme. In this paper we develop these ideas.

The organization of this paper is as follows. In the section \ref{sec:Motivation} we present scenario in which we are working. Next, in sections \ref{sec:BI} and \ref{sec:DW}, we give basic information about Bell inequalities and dimension witnesses. Then, in the section \ref{sec:BItoDW}, we recapitulate a heuristic method of obtaining a dimension witness from a Bell inequality. This method was introduced in \cite{HWL13}. In sections \ref{sec:DWtoBI} and \ref{sec:binaryDW} we precisely state the conditions when the randomness certified by the violation of a Bell inequality lower-bounds the randomness certified by a certain value of dimension witness in the semi-device-independent scenario. In the section \ref{sec:symDW} we investigate the properties of a certain class of dimension witnesses and introduce a procedure of dimension witness reduction, which can be used to obtain from an existing witness a new one with higher amount of certifiable randomness.  In the section \ref{sec:explicitExamples} we give examples of application of the presented methods.

The aims of this paper are as follows. We clarify the methods from our previous paper \cite{HWL13} and give a tighter lower bound on randomness. Using these methods we obtain better dimension witnesses, in particular the one based on the Braunstein-Caves Bell inequality \cite{BC88}. We also extend the applicability of the methods from \cite{HWL13} to arbitrary dimensions.

\section{Motivation of this paper}
\label{sec:Motivation}

Suppose we are a developer of a random number generating device. Since consumers do not trust us, we are interested in finding a way of certification for our device. Common method for the certification of quantum random number generators that are based on measurements on entangled particles is to estimate the value of a certain Bell inequality that is attained in this device. Still, it is too difficult to observe a loophole-free violation of Bell inequality. Thus we prefer prepare and measure protocols.

Both for prepare and measure protocols in the semi-device-independent approach, and for correlation protocols in the device-independent scheme, we would like to define a value that measures how reliable is its particular realization. As this value we take the expectation value of the relevant dimension witness or Bell inequality, respectively, attained in the relevant protocol. This value is called a \textit{security parameter}.

It is possible to consider several relations. One may ask whether, having a protocol of one type, we can relate it to some protocol of another type, in such a way that for the same value of their security parameters the min-entropy certified in one of them, is upper or lower bounded by min-entropy certified by the other one. One may start with a protocol based on a Bell inequality and construct out of it a prepare and measure protocol certifying a reasonable amount of min-entropy. This is useful since there are many randomness expansion protocols based on Bell inequalities\cite{RNGCBT, CK11} and it is easy to obtain new ones \cite{MP13}.

Another situation is when we begin with some SDI protocol and want to lower bound the certified randomness using efficient numerical methods from \cite{NPA07, NPA08}, that works in the device-independent approach. We present a way to obtain a new Bell inequality with the property that the DI protocol using it certifies at most as much randomness as the SDI protocol.

As mentioned above, SDI protocols are much easier to implement than the protocols based on entanglement. For this reason it is useful to have a method that allows to develop devices of the first kind with the help of the well established knowledge about the devices of the second type.

\section{Bell inequalities}
\label{sec:BI}

We define for a DI protocol:
\begin{definition}
    \label{probDI}
    Let $A$, $B$, $X$, and $Y$ be sets.
		
		Probability distribution in DI scheme is a conditional probability distribution $\mathbb{P}(A,B|X,Y)$ such that
    \begin{equation}
        \forall_{a \in A} \forall_{b \in B} \forall_{x \in X} \forall_{y \in Y} P(a,b|x,y) = \Tr \left(\rho M^a_x M^b_y\right), \nonumber
    \end{equation}
    where $\{\{M^a_x\}_{a \in A}\}_{x \in X}$ and $\{\{M^b_y\}_{b \in B}\}_{y \in Y}$ are sets of POVMs on a Hilbert space $\mathbb{H}$, and $\rho$ is a density matrix on $\mathbb{H}$, and
    \begin{equation}
        \forall_{a \in A} \forall_{b \in B} \forall_{x \in X} \forall_{y \in Y} [M^a_x, M^b_y] = 0 \text{ if } x \neq y.
    \end{equation}
    We denote this probability by
    \begin{equation}
        \mathbb{P}[\rho, \{\{M^a_x\}_{a \in A}\}_{x \in X}, \{\{M^b_y\}_{b \in B}\}_{y \in Y}]. \nonumber
    \end{equation}

    If $A = B = \{0,1\}$, then $\mathbb{P}(A,B|X,Y)$ is called binary.

    The set of all DI probability distributions for given $A$, $B$, $X$ and $Y$ is denoted by $\mathcal{P}(A,B|X,Y)$.
\end{definition}

Let us take two sets, $X$ and $Y$, that label the measurement settings of Alice and Bob in DI scheme, and two sets, $A$ and $B$, that label their respective outcomes.

A Bell inequality is a linear function defined, in particular, for probability distributions $\mathcal{P}(A,B|X,Y)$. It is of the form
\begin{equation}
    \label{BI}
    \begin{aligned}
        I & (A,B,X,Y,\{\alpha_{a,b,x,y}\},C_I)[\mathbb{P}(A,B|X,Y)] \equiv \\
        &   = \sum_{a \in A} \sum_{b \in B} \sum_{x \in X} \sum_{y \in Y} \alpha_{a,b,x,y} P(a,b|x,y) + C_I,
    \end{aligned}
\end{equation}
where $\alpha_{a, b, x, y}, C_I \in \mathbb{R}$. We omit $\mathbb{P}$ if it is obvious which probability distribution is considered.

The constant term $C_I$ in a Bell inequality does not change its properties. Still, we retain this general form, both for Bell inequalities, and dimension witnesses in the next section. In the following sections this allows to keep the same maximal expected value when performing a transformation leading from one expression to another.

A particular form of Bell inequality is the following correlation form
\begin{equation}
    \label{BIhat}
    \begin{aligned}
        \hat{I} & (X, Y, \{\alpha_{x,y}\}, \hat{C}_I)[\mathbb{P}(\{0,1\},\{0,1\}|X,Y)] \equiv \\
        & = \sum_{x \in X} \sum_{y \in Y} \hat{\alpha}_{x,y} C(x,y) + \hat{C}_I,
    \end{aligned}
\end{equation}
with $\hat{\alpha}_{x,y},\hat{C}_I \in \mathbb{R}$, and
\begin{equation}
	\begin{aligned}
    C(x,y) &= P(0,0|x,y) - P(0,1|x,y) \\
		& -P(1,0|x,y) + P(1,1|x,y).
	\end{aligned} \nonumber
\end{equation}
Obviously, the form (\ref{BI}) conforms the form (\ref{BIhat}) if, and only if $\alpha_{0,0,x,y} = \alpha_{1,1,x,y} = -\alpha_{0,1,x,y} = -\alpha_{1,0,x,y} = \hat{\alpha}_{x,y}$, and $\mathbb{P}(A,B|X,Y)$ is binary.

For given $A$, $B$, $X$, $Y$, $x_0 \in X$, $y_0 \in Y$, Bell inequality $I$ and $s \in \mathbb{R}$ we define the following terms:
\begin{subequations}
    \begin{equation}
        P_{guess}(\mathbb{P}(A,B|X,Y),x_0,y_0) \equiv \max_{a \in A, b \in B} P(a,b|x_0,y_0) \nonumber
    \end{equation}
    \begin{equation}
        \nonumber
        \begin{aligned}
            H_{\infty} & (\mathbb{P}(A,B|X,Y),x_0,y_0) \equiv \\
            & -\log_2 \left( P_{guess}(\mathbb{P}(A,B|X,Y),x_0,y_0) \right)
        \end{aligned}
    \end{equation}
    \begin{equation}
        \nonumber
        \begin{aligned}
            H_{\infty}^{cert} & (I,x_0,y_0,s) \equiv \\
            & \min_{\mathbb{P}(A,B|X,Y) \in \mathcal{P}(A,B|X,Y)} H_{\infty}(\mathbb{P}(A,B|X,Y),x_0,y_0), \\
            & \text{subject to } I[\mathbb{P}(A,B|X,Y)] \geq s
        \end{aligned}
    \end{equation}
\end{subequations}
The expression $H_{\infty}(\mathbb{P}(A,B|X,Y),x_0,y_0)$ is called min-entropy, and $H_{\infty}^{cert}(I,x_0,y_0,s)$ is the min-entropy certified by the value $s$ of $I$.

\section{Dimension witnesses}
\label{sec:DW}

For a SDI scheme, we have the following definition of the allowed
probability distribution
\begin{definition}
    \label{probSDI}
    Let $\bar{B}$, $\bar{X}$, and $\bar{Y}$ be sets, and $\mathbb{H}$ be a Hilbert space of a finite dimension $d$.
		
		A probability distribution in SDI scheme is a conditional probability distribution $\mathbb{P}_d (\bar{B}|\bar{X},\bar{Y})$ such that for $b \in \bar{B}$, $x \in \bar{X}$ and $y \in \bar{Y}$ we have $P(b|x,y) = \Tr \left(\rho_{x} M^b_y\right)$, where $\{\rho_{x}\}_{x \in \bar{X}}$ is a set of density matrices on $\mathbb{H}$, and $\{M^b_y\}_{b \in \bar{B}}$ are POVMs on $\mathbb{H}$ for all $y \in \bar{Y}$.
		
		We say that $\mathbb{P}_d$ is realized by sets $\{\rho_x\}_{x \in \bar{X}}$ and $\{\{M^b_y\}_{b \in \bar{B}}\}_{y \in \bar{Y}}$, and denote it
    \begin{equation}
        \mathbb{P}_d \left[ \{\rho_x\}_{x \in \bar{X}}, \{\{M^b_y\}_{b \in \bar{B}}\}_{y \in \bar{Y}} \right]. \nonumber
    \end{equation}

    If $\bar{B} = \{0,1\}$, then $\mathbb{P}_d(\bar{B}|\bar{X},\bar{Y})$ is called a binary probability distribution.

    The set of all SDI probability distributions for given $d$, $\bar{B}$, $\bar{X}$ and $\bar{Y}$ is denoted by $\mathcal{P}_d (\bar{B}|\bar{X},\bar{Y})$.

    The set of all SDI probability distributions with restrictions that $d = 2$, $\bar{B} = \{0,1\}$ and $\forall_{b \in \{0,1\}} \forall_{y \in \bar{Y}} \Tr M^b_y = 1$ is denoted by $\mathcal{P}^{(P)}(\bar{X},\bar{Y})$.
\end{definition}

Let $\bar{X}$ and $\bar{Y}$ be sets labeling the settings of Alice and Bob, in the SDI scheme, and let $\bar{B}$ be a set of the outcomes that Bob can obtain.

Dimension witnesses are linear functions of probability distributions of the form
\begin{equation}
    \label{DW}
    \begin{aligned}
        W & (\bar{B}, \bar{X}, \bar{Y}, \{\beta_{b,x,y}\}, C_W)[\mathbb{P}_d(\bar{B}|\bar{X},\bar{Y})] \equiv \\
        & = \sum_{b \in \bar{B}} \sum_{x \in \bar{X}} \sum_{y \in \bar{Y}} \beta_{b,x,y} P(b|x,y) + C_W,
    \end{aligned}
\end{equation}
where $\beta_{b,x,y}, C_W \in \mathbb{R}$, and $d \geq 2$.

If $\bar{B} = \{0,1\}$, then the dimension witness is called binary. If $\forall_{b \in \bar{B}} \forall_{y \in \bar{Y}} \sum_{x \in \bar{X}} \beta_{b,x,y} = 0$, then the dimension witness is called zero-summing.

For given $\bar{B}$, $\bar{X}$, $\bar{Y}$, $x_0 \in \bar{X}$, $y_0 \in \bar{Y}$, dimension witness $W$, $s \in \mathbb{R}$ and $d \geq 2$ we define the following terms:
\begin{subequations}
    \begin{equation}
				\label{Pguess}
        P_{guess}(\mathbb{P}_d(\bar{B}|\bar{X},\bar{Y}),x_0,y_0) \equiv \max_{b \in \bar{B}} P(b|x_0,y_0)
    \end{equation}
    \begin{equation}
        H_{\infty}(\mathbb{P}_d(\bar{B}|\bar{X},\bar{Y}),x_0,y_0) \equiv -\log_2 \left( P_{guess}(\mathbb{P}_d(\bar{B}|\bar{X},\bar{Y}),x_0,y_0) \right) \nonumber
    \end{equation}
    \begin{equation}
        \label{PguessCert}
        \begin{aligned}
            P_{guess}^{cert} & (W,x_0,y_0,s,d) \equiv \max_{\mathbb{P}_d(\bar{B}|\bar{X},\bar{Y}) \in \mathcal{P}_d (\bar{B}|\bar{X},\bar{Y})} \max_{b \in \bar{B}} P(b|x_0,y_0), \\
            & \text{subject to } W[\mathbb{P}_d(\bar{B}|\bar{X},\bar{Y})] \geq s
        \end{aligned}
    \end{equation}
    \begin{equation}
        H_{\infty}^{cert}(W,x_0,y_0,s,d) \equiv -\log_2 \left( P_{guess}^{cert}(W,x_0,y_0,s,d) \right) \nonumber
    \end{equation}
    \begin{equation}
        \label{PguessCertP}
        \begin{aligned}
            P_{guess}^{cert(P)} & (W,x_0,y_0,s) \equiv \\
            & \max_{\mathbb{P}_d(\bar{B}|\bar{X},\bar{Y}) \in \mathcal{P}^{(P)}(\bar{X},\bar{Y},s)} \max_{b \in \bar{B}} P(b|x_0,y_0).
        \end{aligned}
    \end{equation}
\end{subequations}
The expression $H_{\infty}(\mathbb{P}_d(\bar{B}|\bar{X},\bar{Y}),x_0,y_0)$ is called min-entropy, and $H_{\infty}^{cert}(W,x_0,y_0,s)$ is the min-entropy certified by the value $s$ of $W$ (for the dimension $d$).

The following lemma summarizes some properties of dimension witnesses.
\begin{lemma}
    \label{dwLemma}
    Let $\mathbb{H}$ be a Hilbert space of a dimension $2$, and let $W$ be a binary dimension witness defined by certain $\bar{X}$, $\bar{Y}$, $\{\beta_{b,x,y}\}$ and $C_W$.
		
		Let $\{\rho_x\}_{x \in \bar{X}} \equiv \mathcal{S}$ be a set of states on $\mathbb{H}$, and $\{\{M^0_y,M^1_y\}\}_{y \in Y} \equiv \mathcal{M}$ be a set of binary POVM on $\mathbb{H}$. Let $s \equiv W[\mathbb{P}_2 (\mathcal{S},\mathcal{M})]$.
		
		Then, the following implications hold:
    \begin{enumerate}
        \item If $\forall_{y \in \bar{Y}} \sum_x \beta_{0,x,y} = \sum_x \beta_{1,x,y}$, then there exists a set of binary POVM on $\mathbb{H}$, $\tilde{\mathcal{M}} \equiv \{\{\tilde{M}^0_y,\tilde{M}^1_y\}\}_{y}$, such that $\forall_{y,b} \Tr \tilde{M}^b_y = 1$, and $W[\mathbb{P}_2 (\mathcal{S},\tilde{\mathcal{M}})] = s$.
        \item If $\sum_{b,x,y} \beta_{b,x,y} = 0$, and $\forall_{y,b} \Tr M^b_y = 1$, then for  $\tilde{\mathcal{S}} = \{\openone - \rho_x\}_{x \in \bar{X}}$, which is a set of states on $\mathbb{H}$, $W[\mathbb{P}_2 (\tilde{\mathcal{S}},\mathcal{M})] = -s$.
        \item If $\forall_{y,b} \Tr M^b_y = 1$, then there exist a set of projective measurements, $\tilde{\mathcal{M}} \equiv \{\{\Pi^0_y,\Pi^1_y\}\}_{y \in \bar{Y}}$ with $\forall_{b \in \bar{B}, y \in \bar{Y}} \Tr \left( \Pi^b_y \right) = 1$, such that $W[\mathbb{P}_2 (\mathcal{S},\tilde{\mathcal{M}})] \geq s$.
    \end{enumerate}
\end{lemma}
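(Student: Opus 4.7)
The plan is to exploit the Bloch-sphere parametrization. Since $\mathbb{H}$ has dimension $2$, every state can be written $\rho_x = \frac{1}{2}(\openone + \vec{r}_x\cdot\vec{\sigma})$ with $|\vec{r}_x|\leq 1$, and every binary qubit POVM as $M^b_y = \frac{\Tr M^b_y}{2}\openone + (-1)^b\vec{m}_y\cdot\vec{\sigma}$ for some vector $\vec{m}_y$. Positivity of the two elements, together with $\Tr M^0_y + \Tr M^1_y = 2$, forces $|\vec{m}_y|\leq\min\{\Tr M^0_y,\Tr M^1_y\}/2\leq 1/2$, with $|\vec{m}_y|=1/2$ attainable only when both POVM elements are rank-one trace-one projectors. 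A short computation then gives the explicit formula $P(b|x,y) = \Tr(\rho_x M^b_y) = \frac{\Tr M^b_y}{2} + (-1)^b\vec{r}_x\cdot\vec{m}_y$, which drives all three parts.

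For Part 1, I would define $\tilde M^b_y \equiv \frac{1}{2}\openone + (-1)^b\vec{m}_y\cdot\vec{\sigma}$. These operators are trace-one by construction and positive by the bound $|\vec{m}_y|\leq 1/2$ above, so $\tilde{\mathcal{M}}$ is a valid set of trace-one binary qubit POVMs. The shift in probabilities is $\tilde P(b|x,y) - P(b|x,y) = (1-\Tr M^b_y)/2$, whose $b=0$ and $b=1$ values are exactly opposite because $\Tr M^0_y + \Tr M^1_y = 2$. Summing these shifts weighted by $\beta_{b,x,y}$ and invoking the hypothesis $\sum_x\beta_{0,x,y}=\sum_x\beta_{1,x,y}$, the contribution cancels at each $y$, so $W[\mathbb{P}_2(\mathcal{S},\tilde{\mathcal{M}})] = s$.

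For Part 2, the trace-one assumption gives $\tilde P(b|x,y)=\Tr((\openone-\rho_x)M^b_y)=1-P(b|x,y)$, and $\openone-\rho_x$ is a legitimate qubit state (its Bloch vector is $-\vec{r}_x$, of norm at most one). Substituting into the definition of $W$ and using $\sum_{b,x,y}\beta_{b,x,y}=0$ to kill the constant shift yields $W[\mathbb{P}_2(\tilde{\mathcal{S}},\mathcal{M})]=-s$.

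For Part 3, under $\Tr M^b_y=1$ the probability formula specializes to $P(b|x,y) = \frac{1}{2}+(-1)^b\vec{r}_x\cdot\vec{m}_y$, so $W$ takes the affine form $W = K + \sum_y \vec{v}_y\cdot\vec{m}_y$ where $K$ depends only on $C_W$, the coefficients $\{\beta\}$, and the states, and $\vec{v}_y = \sum_x(\beta_{0,x,y}-\beta_{1,x,y})\vec{r}_x$. For each $y$ the linear functional $\vec m_y\mapsto\vec v_y\cdot\vec m_y$ on the ball $|\vec m_y|\leq 1/2$ (which contains the original $\vec m_y$) is maximized on the boundary; I would set $\tilde{\vec m}_y = \frac{1}{2}\hat v_y$ when $\vec v_y\neq 0$ and pick any unit direction otherwise. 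On the boundary $|\tilde{\vec m}_y|=1/2$, so each $\tilde M^b_y$ is a trace-one rank-one projector, and the assembled $\tilde{\mathcal{M}}$ is a projective measurement with $W[\mathbb{P}_2(\mathcal{S},\tilde{\mathcal{M}})]\geq s$. The main work across all three parts is not conceptual but the careful verification that the constructed operators remain valid quantum objects; once the Bloch-sphere parametrization and the bound $|\vec m_y|\leq 1/2$ are in hand, everything else is linear algebra.
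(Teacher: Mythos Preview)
Your proof is correct and, at its core, follows the same logic as the paper: in Part~1 you shift each POVM element by a multiple of the identity to fix the trace (the paper does this with $c_y=\tfrac12(1-\Tr M^0_y)$ and checks positivity via eigenvalues; you do it by replacing $\tfrac{\Tr M^b_y}{2}$ with $\tfrac12$ in the Bloch form and invoking $|\vec m_y|\le 1/2$), Part~2 is the same direct calculation in both, and in Part~3 both arguments reduce to ``a linear functional on a convex set is maximized at an extreme point''. The only real difference is the coordinate system: the paper works in the eigenbasis of each $M^0_y$ and views $s_y$ as a convex combination in the single scalar $\lambda_y\in[0,1]$, selecting the better of the two endpoint projectors $\proj{0_y},\proj{1_y}$; you instead optimize $\vec v_y\cdot\vec m_y$ over the full Bloch ball $|\vec m_y|\le 1/2$ and land on the projector along $\hat v_y$. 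Your construction therefore actually returns the \emph{global} maximum of $W$ over trace-one POVMs (for the fixed states), whereas the paper's returns only something $\ge s$; both suffice for the lemma. The Bloch-sphere framing is slightly more uniform across the three parts, while the paper's spectral approach avoids introducing Pauli matrices and keeps the argument basis-free.
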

\begin{proof}
		\begin{enumerate}
				\item Let us take $y \in \bar{Y}$. Let $c_y = \frac{1}{2} \left(1 - \Tr(M^0_y)\right)$, $\tilde{M}^0_y = M^0_y + c_y \openone$, and $\tilde{M}^1_y = M^1_y - c_y \openone$. Obviously
				\begin{equation}
						\tilde{M}^0_y + \tilde{M}^1_y = M^0_y + M^1_y = \openone. \nonumber
				\end{equation}

				Now, we prove that $\forall_{y,b} \tilde{M}^b_y \succeq 0$. There exist an orthonormal basis $\{\ket{0_y},\ket{1_y}\}$ in that
				\begin{equation}
						M^0_y = v_0 \proj{0_y} + v_1 \proj{1_y}, \nonumber
				\end{equation}
				and
				\begin{equation}
						M^1_y = (1-v_0) \proj{0_y} + (1-v_1) \proj{1_y}, \nonumber
				\end{equation}
				where $v_0,v_1 \in [0,1]$. We have $c_y=\frac{1}{2}(1-v_0-v_1)$, and $\openone = \proj{0_y}+\proj{1_y}$. Thus
				\begin{equation}
						\tilde{M}^0_y = \frac{1}{2}(1+v_0-v_1)\proj{0} + \frac{1}{2}(1-v_0+v_1)\proj{1}. \nonumber
				\end{equation}
				Since $1+v_0-v_1\geq 0$ and $1-v_0+v_1 \geq 0$, we have $\tilde{M}^0_y \succeq 0$, and $\Tr \tilde{M}^0_y = 1$. Similarly, we check that $\tilde{M}^1_y \succeq 0$ and $\Tr \tilde{M}^1_y = 1$.

				Repeating this construction for all $y \in \bar{Y}$, we obtain a set of POVM, $\tilde{\mathcal{M}} \equiv \{\{\tilde{M}^0_y,\tilde{M}^1_y\}\}_{y \in \bar{Y}}$.

				We have
				\begin{equation}
						\Tr (\rho_x \tilde{M}^b_y) = P(b|x,y) + (-1)^b \cdot c_y, \nonumber
				\end{equation}
				and thus
				\begin{equation}
						\begin{aligned}
								W & [\mathbb{P}_d(\mathcal{S},\tilde{\mathcal{M}})] = \sum_{b,x,y} \beta_{b,x,y} \Tr (\rho_x \tilde{M}^b_y) \\
								& = s + \sum_y c_y \left( \sum_x \beta_{0,x,y} - \sum_x \beta_{1,x,y} \right) = s. \nonumber
						\end{aligned}
				\end{equation}

				\item We have
				\begin{equation}
						\begin{aligned}
								W[\mathbb{P}_2 & (\tilde{\mathcal{S}},\mathcal{M})] = \sum_{b,x,y} \beta_{b,x,y} \Tr \left( (\openone - \rho_x) M^b_y \right) \\
								& = \sum_{b,x,y} \beta_{b,x,y} (1 - P(b|x,y)) \\
								& = \sum_{b,x,y} \beta_{b,x,y} - \sum_{b,x,y} \beta_{b,x,y} P(b|x,y) = -s. \nonumber
						\end{aligned}
				\end{equation}

				\item For any $y \in \bar{Y}$ we have
				\begin{equation}
						M^0_y = \lambda_y \proj{0_y} + (1-\lambda_y) \proj{1_y}, \nonumber
				\end{equation}
				and
				\begin{equation}
				M^1_y = (1-\lambda_y) \proj{0_y} + \lambda_y \proj{1_y}, \nonumber
				\end{equation}
				for a certain basis $\{\ket{0_y}, \ket{1_y}\}$, $\lambda_y \in [0,1]$.

				Let us define $s_y \equiv \sum_{b,x} \beta_{b,x,y} P(b|x,y)$. Denote
				\begin{equation}
						\sum_x \left( \beta_{0,x,y} \Tr (\rho_x \proj{0_y}) + \beta_{1,x,y} \Tr (\rho_x \proj{1_y}) \right) \nonumber
				\end{equation}
				by $s_{y,0}$, and similarly
				\begin{equation}
						\sum_x \left( \beta_{0,x,y} \Tr (\rho_x \proj{1_y}) + \beta_{1,x,y} \Tr (\rho_x \proj{0_y}) \right) \nonumber
				\end{equation}
				by $s_{y,1}$.

				We have $s = \sum_y s_y$, and
				\begin{equation}
						s_y = \lambda_y s_{y,0} + (1-\lambda_y) s_{y,1}. \nonumber
				\end{equation}
				If $s_{y,0} \geq s_{y,1}$, then we take $\tilde{M}^0_y \equiv \proj{0_y}$ and $\tilde{M}^1_y \equiv \proj{1_y}$, otherwise we take $\tilde{M}^0_y \equiv \proj{1_y}$ and $\tilde{M}^1_y \equiv \proj{0_y}$. For $\tilde{\mathcal{M}} = \{\{\tilde{M}^0_y,\tilde{M}^0_y\}\}_{y \in \bar{Y}}$ it is easy to see that
				\begin{equation}
						W[\mathbb{P}_2 (\mathcal{S},\tilde{\mathcal{M}})] = \sum_y \max(s_{y,0},s_{y,1}) \geq \sum_y s_y = s. \nonumber
				\end{equation}
		\end{enumerate}
\end{proof}

The first statement in this lemma says that in the dimension $2$ the condition that all measurement operators have trace $1$ is not restrictive with regards to the set of values possible to attain. The second statement gives sufficient conditions under which an operation of negation of all states gives the same value of a dimension witness but with opposite sign. The third statement, which may be used to complement the first one, shows that under certain conditions it is not restrictive to use only projective measurements in case when the values possible to be attained are considered.

\section{A heuristic method for obtaining a dimension witness from a Bell inequality}
\label{sec:BItoDW}

Consider the following Bell experiment. Suppose we are given a Bell inequality of the form (\ref{BI}). Alice and Bob share an entangled state. Alice chooses a measurement setting $x \in X$, and obtains an outcome $a \in A$. For each setting $x$ and result $a$, we assign a conditional probability $P(a|x)$. Alice's measurement prepares some state at Bob's side. Next, Bob chooses a measurement setting $y \in Y$, and obtains an outcome $b \in B$. The probability that Bob gets $b$, knowing both the setting and the result of Alice, is $P(b|a,x,y)$.

We rewrite\footnote{We are using here the no-signaling principle.} the joint conditional probability of a given pair of results for a given pair of settings as $P(a,b|x,y)=P(b|a,x,y) \cdot P(a|x)$. Thus, defining $\bar{x} \equiv (a,x)$, the initial Bell inequality is transformed to the \textit{form} of a dimension witness (see the equation (\ref{DW})), with $\beta_{b,\bar{x},y} \equiv \beta_{b,(a,x),y} \equiv \alpha_{a,b,x,y} \cdot P(a|x)$. We have $\bar{B} = B$, $\bar{X} = A \times X$, and $\bar{Y} = Y$.

The fact that it is possible to transform a Bell inequality into the form of a dimension witness, leads us to some \textit{heuristic} method to achieve an SDI protocol that certifies a reasonable amount of randomness, once we have a DI protocol. We get the SDI protocol if, instead of measuring on Alice's side, she gets "the outcome" as a part of her input with the probability distribution $P(a|x)$. Thus, we obtain a pair $(a,x)$ that we use as an index of the state to be send. This way, the device on the side of Alice prepares one of $|\bar{X}| = |A| \cdot |X|$ states $\rho_{(a,x)}$. Bob still has $|\bar{Y}| = |Y|$ measurement settings.

\section{Lower-bounds for dimension witnesses via semi-definite programs}
\label{sec:DWtoBI}

In this section we construct a sequence of devices that shows that the randomness certified by an SDI protocol can be lower bounded by the randomness certified in a certain DI protocol minus $\log_2{d}$.

We consider a device $\textbf{D0}$ that we get from an untrusted vendor, and that consists of two black boxes. Its only parameter that we can verify (or trust), is the dimension of the message send from one part of it, to the another one. We assume, that the device cannot communicate with the world outside the laboratory. The black box on Alice's side has buttons with labels $x \in \bar{X}$ and emits one of the states of the dimension $d$ from the set of states $\{\rho_{x}\}_{x \in \bar{X}}$. The states are unknown to us, and are of arbitrary, possibly mixed, form. The black box on Bob's side has buttons with labels $y \in \bar{Y}$ and, after receiving the qubit from Alice's black box, it performs one of the measurements given by POVMs from the set$\{\{M^b_y\}_{b \in \bar{B}}\}_{y \in \bar{Y}}$. We do not know, how the measurements are performed. This description is semi-device~independent, since we know only the dimension $d$.

Suppose we are given a dimension witness W (of the form (\ref{DW})) that achieves in the experiments on the device $\textbf{D0}$ the expected value $W_{0}$. We denote the conditional probability of obtaining the outcome $b$ when the chosen settings are $x$ and $y$, by $P_{\textbf{D0}}(b | x, y)$.

The device $\textbf{D0}$ is not trusted, but it is possible to consider another device, $\textbf{D1}$, that consists of two parts, with buttons labeled by $x \in \bar{X}$ and $y \in \bar{Y}$ on the Alice's side and on the Bob's side, respectively. The parts are sharing a maximally entangled state of the dimension $d$. The part on the Alice's side performs some measurement, depending on the chosen input $x$. This measurement projects the Alice's part of the singlet on the state $\rho_{x}$ that is the same as the relevant state from the device $\textbf{D0}$. If the projection succeeded, which happens with the probability $\frac{1}{d}$, then the device returns $a = 0$ and changes the state on the Alice's side into the state $\rho_{x}$, otherwise it returns $a = 1$. Since the shared state is a singlet, this measurement prepares the same $d$-dimensional state on the Bob's side. Then he performs the same POVM $\{M^b_y\}_{b \in \bar{B}}$ as the device $\textbf{D0}$, and returns the outcome $b \in \bar{B}$.

The probability that Alice gets the outcome $a$ with the setting $x$, and simultaneously Bob gets the outcome $b$ with the setting $y$ is denoted by $P_{\textbf{D1}}(a,b|x,y)$. It is easy to see, that $P_{\textbf{D0}}(b|x,y) = d \cdot P_{\textbf{D1}}(0,b|x,y)$.

Now let us consider another device, $\textbf{D2}$. It has the same interface like $\textbf{D1}$, but the conditions on the internal working are relaxed, \textit{viz.} we do not assume anything about the performed measurements, and Alice's and Bob's parts are allowed to share any, possibly entangled, state $\rho$ of an arbitrary dimension. The probability of obtaining the outcomes $a$ and $b$ with given pair of settings $x$ and $y$ for Alice and Bob, respectively, are denoted by $P_{\textbf{D2}}(a,b|x,y)$. We apply a constraint $\forall_{x \in X} P_{\textbf{D2}}(0|x) = \frac{1}{d}$, where $P_{\textbf{D2}}(a|x)$ is the probability of getting the outcome $a$ by Alice with the setting $x$ with the device $\textbf{D2}$.

Obviously, all the conditional probability distributions that are possible to be obtained by the device $\textbf{D1}$ (and thus also by the device $\textbf{D0}$), are also possible to be obtained by this device. Note that this description is fully device-independent, and that there are semi-definite programs in the NPA hierarchy \cite{NPA07, NPA08} that efficiently approximate the probability distributions of the device $\textbf{D2}$.

Since the device $\textbf{D2}$ is a relaxed version of the initial device $\textbf{D0}$, if both of them have the same value of the relevant security parameters, then the certified amount of min-entropy generated by the device $\textbf{D2}$ gives a lower bound of the min-entropy certified to be generated by the device $\textbf{D0}$.

We recapitulate the above results in the following theorem
\begin{theorem}
    \label{DWtoBItheorem}
    Let $B = \bar{B}$, $X = \bar{X}$ and $Y = \bar{Y}$ be sets. Let us take $s \in \mathbb{R}$, $d \geq 2$, a Bell inequality $I$ of the form (\ref{BI}), and a dimension witness $W$ of the form (\ref{DW}), satisfying $\beta_{b,x,y} = d \cdot \alpha_{0,b,x,y}$.

    Let $\mathcal{P}_{d,SDI}(s)$ be a subset of $\mathcal{P}_d (\bar{B}|\bar{X},\bar{Y})$ with $d \geq 2$ (see the definition \ref{probSDI}) that satisfies $W = s$.

    Let $\mathcal{P}_{DI}(s)$ be a set of all probability distribution defined by $P(b|x,y) \equiv d \cdot P(0,b|x,y)$, where $\mathbb{P}(A,B|X,Y)$ is a device-independent probability distribution such that $I[\mathbb{P}(A,B|X,Y)] = s$, with $A = \{0,1\}$.

    Then $\mathcal{P}_{d,SDI}(s) \subseteq \mathcal{P}_{DI}(s)$.
\end{theorem}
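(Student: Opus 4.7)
The plan is to prove the inclusion by explicit realization, formalizing the $\textbf{D0}\to\textbf{D1}$ reduction sketched earlier in this section: for every SDI distribution achieving $W=s$ I will exhibit an entanglement-based DI distribution achieving $I=s$ whose rescaled marginal $d\cdot P(0,b|x,y)$ returns the original SDI probabilities. By the definition of $\mathcal{P}_{DI}(s)$ this is exactly the inclusion claim.

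Concretely, I would fix a representation $\mathbb{P}_d=\mathbb{P}_d[\{\rho_x\}_{x\in\bar{X}},\{\{M^b_y\}_{b\in\bar{B}}\}_{y\in\bar{Y}}]$ on a Hilbert space $\mathbb{H}$ of dimension $d$. On $\mathbb{H}\otimes\mathbb{H}$ I take as shared state the maximally entangled vector $\ket{\Phi^+}=\tfrac{1}{\sqrt{d}}\sum_{i=0}^{d-1}\ket{ii}$, let Alice's measurement for setting $x$ be the binary POVM $\{\rho_x^T,\openone-\rho_x^T\}$ (well defined because $0\preceq\rho_x\preceq\openone$ gives $0\preceq\rho_x^T\preceq\openone$), and let Bob perform the unchanged $\{M^b_y\}$. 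Positivity, normalization and the commutation condition of Definition \ref{probDI} are automatic from the tensor-product structure, so this defines a valid element of $\mathcal{P}(\{0,1\},\bar{B}|\bar{X},\bar{Y})$. The central computation uses the steering identity $(A\otimes\openone)\ket{\Phi^+}=(\openone\otimes A^T)\ket{\Phi^+}$ with $A=\rho_x^T$:
\begin{equation}
P(0,b|x,y)=\bra{\Phi^+}(\rho_x^T\otimes M^b_y)\ket{\Phi^+}=\tfrac{1}{d}\Tr(\rho_x M^b_y)=\tfrac{1}{d}P_{SDI}(b|x,y),\nonumber
\end{equation}
which supplies the required rescaling $d\cdot P(0,b|x,y)=P_{SDI}(b|x,y)$ together with the marginal $P(0|x)=1/d$ invoked in the $\textbf{D2}$ description.

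It then remains to verify that the associated value of the Bell functional is $s$. Substituting the rescaling into the $a=0$ part of $I$ and using the hypothesis $\beta_{b,x,y}=d\cdot\alpha_{0,b,x,y}$ identifies that contribution with the dimension-witness sum $\sum_{b,x,y}\beta_{b,x,y}P_{SDI}(b|x,y)$, so the Bell value on $\mathbb{P}$ can be read off as $W[\mathbb{P}_d]=s$. I expect this final bookkeeping step to be the main obstacle, since the statement does not restrict the $\alpha_{1,b,x,y}$ coefficients or the additive constants $C_I,C_W$; the $a=1$ contributions and constants have to be handled either by absorption into $C_I$ or by exercising the residual no-signaling freedom in $P(1,b|x,y)$ (which the rescaling leaves unspecified) to match the numerical value exactly. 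The conceptual engine of the proof is the steering trick; once that is settled, the inclusion reduces to linearity of $I$ and $W$ in the probabilities combined with this coefficient matching.
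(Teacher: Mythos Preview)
Your approach is exactly the paper's: the discussion preceding the theorem \emph{is} the paper's proof, and it consists precisely of the $\textbf{D0}\to\textbf{D1}$ reduction via the maximally entangled state that you carry out, followed by the relaxation to $\textbf{D2}$. Your use of the transpose in Alice's POVM and the steering identity makes the computation $P(0,b|x,y)=\tfrac{1}{d}\Tr(\rho_x M^b_y)$ more explicit than the paper's informal phrasing, but the content is identical.

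The obstacle you flag is real and is not resolved in the paper either. The theorem as stated constrains only $\alpha_{0,b,x,y}$, so for a generic choice of $\alpha_{1,b,x,y}$ and of $C_I$ versus $C_W$ one cannot conclude $I[\mathbb{P}]=s$ from $W[\mathbb{P}_d]=s$; the paper silently assumes the natural normalisation $\alpha_{1,b,x,y}=0$ and $C_I=C_W$, which is what is used in every example (see \S\ref{sub:CGLMP}). Note also a slip in your bookkeeping: with $\alpha_{0,b,x,y}=\beta_{b,x,y}/d$ and $P(0,b|x,y)=P_{SDI}(b|x,y)/d$, the $a=0$ part of $I$ equals $\tfrac{1}{d^2}\sum\beta_{b,x,y}P_{SDI}(b|x,y)$, not the full sum; so even under that natural normalisation one still needs $C_I$ chosen to absorb the factor, or (equivalently) one should take $\alpha_{0,b,x,y}=\beta_{b,x,y}$ and impose $P(0|x)=1/d$ as an extra linear constraint in the SDP, which is how the paper's $\textbf{D2}$ device is actually described.
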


This way we obtain a way to get a relation between Bell inequalities and dimension witnesses with the property that the amount of randomness certified by a Bell inequality lower-bounds the amount of randomness certified by the  relevant dimension witness. One of key features of the set $\mathbb{P}_{DI}$ is that it can be efficiently approximated using semi-definite programming with the NPA hierarchy.

From the definition of $\mathcal{P}_{DI}(B|X,Y)$, namely using $P(b|x,y) = d \cdot P(0,b|x,y)$, we get that the certified min-entropy of SDI protocol is lower-bounded by the one of the DI protocol minus $\log_2{d}$. Notable property of the method is that we obtain a bound for any dimension of the communicated system changing only a value of the linear bound. 

\begin{figure}[!htbp]
    \includegraphics[width=0.45\textwidth]{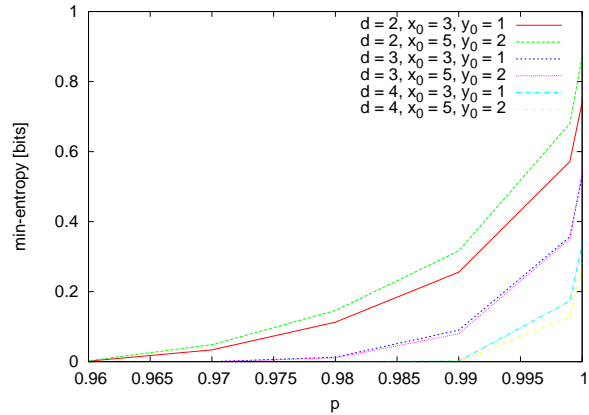}
    \caption{(Color on-line) Lower-bounds via SDP on the certified randomness for a dimension witness obtained from the CGLMP inequality (see the equation(\ref{CGLMPdw}) in section \ref{sec:explicitExamples}) using the methods from the section \ref{sec:DWtoBI} for different values of the dimension $d$.\label{fig:CGLMP}}
\end{figure}

The figure \ref{fig:CGLMP} shows an example of application of the theorem \ref{DWtoBItheorem}.

\section{Binary zero-summing dimension witnesses}
\label{sec:binaryDW}

In this section the properties of binary zero-summing dimension witnesses are investigated. Recall that a dimension witness of the form given by the equation (\ref{DW}) is called zero-summing if $\forall_{b \in \bar{B}} \forall_{y \in \bar{Y}} \sum_{x \in \bar{X}} \beta_{b,x,y} = 0$, and binary if $\bar{B} = \{0,1\}$. The reason to examine them is that it is possible to obtain a tighter semi-definite relaxations for this class of dimension witnesses.

Let us start with a binary zero-summing dimension witness $W = W(\{0,1\},\bar{X},\bar{Y},\{\beta_{b,x,y}\},C_W)$ that is used to certify the randomness generated by measuring the state $x_0 \in \bar{X}$ with the measurement setting $y_0 \in \bar{Y}$. Let $\{\rho_x\}_{x \in \bar{X}}$ and $\{\{M^0_y,M^1_y\}\}_{y \in \bar{Y}}$ be the states and measurements that maximize the guessing probability (see the equation (\ref{Pguess})) of the generated bits by the untrusted vendor.

First note that the value of the dimension witness does not change if, for arbitrary $y \in \bar{Y}$, the measurement is changed to $\{M^0_y + c \openone, M^1_y - c \openone\}$, where $c$ is such that the spectrum of the operators remains in the range $[0,1]$. Thus, since the potential adversary is interested in increasing the probability of a particular outcome of the measurement $y_0$ as much as possible, the form of these measurements that maximizes his guessing probability is the following:
\begin{equation}
    \label{optMeasY0}
    \begin{array}{l l}
        \begin{bmatrix}
            1 & 0 \\
            0 & 1 - \delta
        \end{bmatrix},
        & \quad
        \begin{bmatrix}
            0 & 0 \\
            0 & \delta
        \end{bmatrix}
    \end{array}
\end{equation}

By the lemma \ref{dwLemma}.1 and \ref{dwLemma}.3 it is not restrictive for the vendor  to use only projectors of trace $1$ for the measurements different than $y_0$.

The strategy of using a measurement of the form (\ref{optMeasY0}) for the setting $y_0$, and projectors of trace $1$ for all remaining measurements is equivalent to using the following mixed strategy. In $\delta$ cases, a projective measurement of trace $1$ is used for the measurement $y_0$ (we call this strategy:  P), and in $1 - \delta$ cases the outcome is deterministic - this is referred hereafter as a deterministic strategy, or simply: D. For the remaining measurements the same projective measurements of trace $1$ are used in both cases.

The guessing probability for the strategy D is $1$, and for the strategy P is $p$, thus the average guessing probability is
\begin{equation}
    \label{deltaAffine}
    (1 - \delta) + \delta \cdot p.
\end{equation}

In the case of a zero-summing dimension witness with the deterministic strategy, measurements with the setting $y_0$ give no contribution to the value of the witness. Thus the certification of the randomness with the dimension witness
\begin{equation}
    W = W(\{0,1\},\bar{X},\bar{Y},\{\beta_{b,x,y}\},C_W) \nonumber
\end{equation}
when the vendor of the device uses the mixed strategy is, after applying certain affine transformation (see equation (\ref{deltaAffine})), equivalent to the certification with a dimension witness
\begin{equation}
    W_{(\delta,y_0)}(\{0,1\},\bar{X},\bar{Y},\{\tilde{\beta}_{b,x,y}\},C_W) \nonumber
\end{equation}
with $\tilde{\beta}_{b,x,y}$ defined in the equation (\ref{deltaBeta}), and the strategy P, where the guessing probability of Eve is given by the equation (\ref{deltaAffine}).

Since the vendor may choose any $\delta \in [0,1]$ that allows to observe the required value of the dimension witness $W$ when calculating lower-bound on the certified min-entropy, the worst case should be considered for a particular situation.

This way, we have proved the following
\begin{lemma}
    Let $W = W(\{0,1\},\bar{X},\bar{Y},\{\beta_{b,x,y}\},C_W)$ be a binary zero-summing dimension witness, $x_0 \in \bar{X}$, and $y_0 \in \bar{Y}$. Let $W_{(\delta,y_0)} = W_{(\delta,y_0)}(\{0,1\},\bar{X},\bar{Y},\{\tilde{\beta}_{b,x,y}\},C_W)$ be a dimension witness, where
    \begin{equation}
        \label{deltaBeta}
        \tilde{\beta}_{b,x,y} = \left\{
        \begin{array}{l l}
            \beta_{b,x,y} & \quad \text{if $y \neq y_0$}\\
            \delta \cdot \beta_{b,x,y} & \quad \text{if $y = y_0$}
        \end{array} \right..
    \end{equation}
    Then
    \begin{equation}
        \nonumber
        \begin{aligned}
            P_{guess}^{cert}& (W,x_0,y_0,s,2) = \\
            & \max_{\delta \in [0,1]} \left( (1 - \delta) + \delta \cdot P_{guess}^{cert(P)}(W_{(\delta,y_0)},x_0,y_0,s) \right)
        \end{aligned}
    \end{equation}
    where $P_{guess}^{cert}(W,x_0,y_0,s,2)$ and $P_{guess}^{cert(P)}(W,x_0,y_0,s)$ are defined in equations (\ref{PguessCert}) and (\ref{PguessCertP}).
\end{lemma}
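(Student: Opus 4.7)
The plan is to reduce the adversary's optimization to a one-parameter convex decomposition of the measurement on $y_0$, and then to use the zero-summing hypothesis to absorb the deterministic branch into a simple rescaling of $W$.

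First I would use Lemma \ref{dwLemma} to normalize the adversary's devices. By parts 1 and 3 of that lemma, for every $y \neq y_0$ I may replace $\{M^0_y, M^1_y\}$ by a pair of trace-$1$ projectors without decreasing $W$. For the distinguished setting $y_0$, observe that the adversary's figure of merit $P(b|x_0,y_0)$ can only improve under the trace-preserving shift $M^0_{y_0} \mapsto M^0_{y_0} + c \openone$, $M^1_{y_0} \mapsto M^1_{y_0} - c \openone$, up to the positivity constraint. Driving this shift to saturation, one eigenvalue of $M^0_{y_0}$ reaches $1$, so that in an appropriate basis $\{\ket{0_{y_0}}, \ket{1_{y_0}}\}$ the optimal $y_0$-measurement takes the diagonal form (\ref{optMeasY0}), parametrized by a single $\delta \in [0,1]$.

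The key observation is that (\ref{optMeasY0}) admits the convex decomposition
\begin{equation}
P(b|x,y_0) \;=\; (1-\delta)\,\delta_{b,0} \;+\; \delta\,\Tr(\rho_x \proj{b_{y_0}}),
\end{equation}
exhibiting the $y_0$-measurement as a mixture of a deterministic branch \textbf{D} (outputting $0$ independent of the state, weight $1-\delta$) and a projective branch \textbf{P} of trace $1$ (weight $\delta$). After reordering the basis so that outcome $0$ is the guessed one, the guessing probability on $(x_0,y_0)$ averages to $(1-\delta) + \delta\,p$ as in (\ref{deltaAffine}). Substituting the decomposition into $W$, the \textbf{D}-branch contribution at $y = y_0$ is $(1-\delta)\sum_x \beta_{0,x,y_0} = 0$ by the zero-summing hypothesis, while the \textbf{P}-branch contributes the usual projective value scaled by $\delta$. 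The constraint $W \geq s$ on the mixed strategy is therefore equivalent to $W_{(\delta,y_0)} \geq s$ on the purely projective strategy \textbf{P}, with the same additive constant $C_W$ and with coefficients $\tilde\beta_{b,x,y}$ given by (\ref{deltaBeta}). Maximizing the inner objective yields $P_{guess}^{cert(P)}(W_{(\delta,y_0)},x_0,y_0,s)$, and the outer maximum over $\delta \in [0,1]$ completes the identity.

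The step I expect to be the main obstacle is the first one: rigorously justifying that the two normalizations (trace-$1$ projectors for $y \neq y_0$, diagonal form (\ref{optMeasY0}) for $y_0$) are genuinely without loss of generality for the joint optimization, which couples the constraint $W \geq s$ to the objective $P_{guess}$. Lemma \ref{dwLemma} handles the witness side and the monotonicity of $P(b|x_0,y_0)$ under the positivity-preserving shift handles the guessing side, but one must check these two normalizations are compatible and that the supremum is actually attained, so that the outer maximum over $\delta$ is a genuine maximum rather than a supremum over strategies lying outside the two-branch convex hull.
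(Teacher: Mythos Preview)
Your proposal is correct and follows essentially the same route as the paper: normalize the $y \neq y_0$ measurements to trace-$1$ projectors via Lemma~\ref{dwLemma}, push the $y_0$ measurement to the one-parameter diagonal form~(\ref{optMeasY0}) by the identity shift, decompose that into the deterministic branch \textbf{D} and the projective branch \textbf{P}, use zero-summing to kill the \textbf{D} contribution to $W$, and optimize over $\delta$. The only point the paper states a bit more explicitly than you do is that the identity shift on $y_0$ leaves $W$ itself invariant (not merely the guessing side), which is immediate from $\sum_x \beta_{0,x,y_0} = \sum_x \beta_{1,x,y_0} = 0$; your concern about compatibility of the two normalizations is resolved by this observation, and the paper does not address the attainment issue you raise any more carefully than you already have.
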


The consequence of restricting the vendor to the dimension $2$ and measurements of trace $1$ is that the following holds for all $x \in \bar{X}$ and $y \in \bar{Y}$, and for any $b \in \{0,1\}$:
\begin{equation}
    \label{negRho}
    \begin{aligned}
        \Tr & (\neg \rho_x M^{\neg b}_y) = \Tr((\openone - \rho_x) (\openone - M^b_y)) = \\
        & 1 - \Tr(M^b_y) + \Tr(\rho_x M^b_y) = \Tr(\rho_x M^b_y) = P(b|x,y),
    \end{aligned}
\end{equation}
where $\neg \rho_x \equiv \openone - \rho_x$. This relation allows to refine the relaxation given in the section \ref{sec:DWtoBI}.

Let us consider a device $\textbf{D1}^{\prime}$ that models the strategy P by sharing the singlet state, projecting on states $\{\rho_x\}_{x \in \bar{X}}$ on the side of Alice, and measuring on the side of Bob with measurements of trace $1$, $\{\{M^0_y,M^1_y\}\}_{y \in \bar{Y}}$. In contrast to the device $\textbf{D1}$, if the projection on a state $\rho_x$ for any $x \in \bar{X}$ fails, then the prepared state is $\neg \rho_x$. It is easy to see that, by the equation (\ref{negRho}), the probabilities obtained in this device are constrained by the following relation:
\begin{equation}
    \label{negAnegB}
    P(a,b|x,y) = P(\neg a, \neg b|x,y)
\end{equation}
for all $a, b \in \{0,1\}$, $x \in X$, and $y \in Y$. A further relaxation, analogous to the one leading from the device $\textbf{D1}$ to the device $\textbf{D2}$, allows to obtain a device $\textbf{D2}^{\prime}$, satisfying the relation (\ref{negAnegB}), that can be modeled by a semi-definite program in the device-independent scheme.

This way we have proved the following theorem
\begin{theorem}
    \label{ZeroSumTrm}
        Let $X = \bar{X}$ and $Y = \bar{Y}$ be sets. Let us take $s \in \mathbb{R}$, a Bell inequality $I$ of the form (\ref{BI}), and a binary zero-summing dimension witness $W$ of the form (\ref{DW}), satisfying $\beta_{b,x,y} = \alpha_{0,b,x,y} = \alpha_{1,\neg b,x,y}$.

    Let $\mathcal{P}_{SDI}^{(P)}(s)$ be a subset of $\mathcal{P}^{(P)}(\bar{X},\bar{Y})$ (see the definition \ref{probSDI}) containing those probabilities $\mathbb{P}_2 (\{0,1\}|\bar{X},\bar{Y})$ that satisfies $W[\mathbb{P}_2] = s$.

    Let $\mathcal{P}_{DI,cond}(s)$ be a set probability distributions defined by $P(b|x,y) \equiv P(0,b|x,y) + P(1,\neg b|x,y)$, where $\mathbb{P}(A,B|X,Y)$ is a device-independent probability distribution that satisfies $I[\mathbb{P}(A,B|X,Y)] = s$ and the relation (\ref{negAnegB}).

    Then $\mathcal{P}_{SDI}^{(P)}(s) \subseteq \mathcal{P}_{DI,cond}(s)$.
\end{theorem}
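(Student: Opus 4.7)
The plan is to make rigorous the construction of the device $\textbf{D1}^{\prime} \to \textbf{D2}^{\prime}$ sketched in the paragraphs immediately preceding the theorem. I start with an arbitrary $\mathbb{P}_2 \in \mathcal{P}_{SDI}^{(P)}(s)$, realized by states $\{\rho_x\}_{x \in \bar{X}}$ on a qubit and trace-$1$ binary POVMs $\{\{M^0_y, M^1_y\}\}_{y \in \bar{Y}}$. To lift this to a DI distribution, I let Alice and Bob share a maximally entangled two-qubit state, have Alice's setting $x$ perform the projective measurement with elements $\rho_x$ (outcome $a=0$) and $\openone - \rho_x$ (outcome $a=1$), and have Bob apply the original POVM $\{M^0_y, M^1_y\}$.

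Next, I compute the joint probabilities via the steering property: outcome $a=0$ occurs with probability $\tfrac{1}{2}$ and prepares $\rho_x$ on Bob's side, while $a=1$ prepares $\neg\rho_x \equiv \openone - \rho_x$. Hence $P(0,b|x,y) = \tfrac{1}{2} \Tr(\rho_x M^b_y) = \tfrac{1}{2} P_{SDI}(b|x,y)$ and $P(1,b|x,y) = \tfrac{1}{2} \Tr(\neg\rho_x M^b_y)$. Invoking the trace-$1$ identity (\ref{negRho}) yields $P(1,\neg b|x,y) = \tfrac{1}{2} P_{SDI}(b|x,y)$, so that
\begin{equation}
P(0,b|x,y) + P(1,\neg b|x,y) = P_{SDI}(b|x,y), \nonumber
\end{equation}
which is exactly the reduction used to define $\mathcal{P}_{DI,cond}$. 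The same identity gives the swap symmetry $P(a,b|x,y) = P(\neg a, \neg b|x,y)$, i.e.\ relation (\ref{negAnegB}).

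Then I verify $I[\mathbb{P}] = s$ on the DI side. The hypothesis $\alpha_{0,b,x,y} = \beta_{b,x,y}$ combined with $\alpha_{1,\neg b,x,y} = \beta_{b,x,y}$ (equivalently $\alpha_{1,b,x,y} = \beta_{\neg b,x,y}$) allows the $a=1$ sum in $I[\mathbb{P}]$ to be reindexed by $b \leftrightarrow \neg b$; the two halves then combine to $\sum_{b,x,y} \beta_{b,x,y} P_{SDI}(b|x,y)$. Matching the constants by taking $C_I = C_W$ (implicit in the statement, and without loss of generality since constants do not change the affine structure, as noted after (\ref{BI})), this equals $W[\mathbb{P}_2] = s$.

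Finally, the constructed DI distribution sits in dimension $2$ with a very specific structure, but it already obeys (\ref{negAnegB}) and $I = s$; relaxing the internal constraints (arbitrary dimension, arbitrary shared state, arbitrary POVMs) can only enlarge the set of attainable distributions. Thus the reduced $\mathbb{P}_2$ lies in $\mathcal{P}_{DI,cond}(s)$, establishing the inclusion. The main obstacle I anticipate is not conceptual but bookkeeping: one must track the two-part coefficient condition carefully through the reindexing to see that the contributions from $a=0$ and $a=1$ coalesce cleanly, and one must be transparent that the matching of additive constants $C_I = C_W$ is part of the intended reading of the statement. The heavy lifting — reducing attention to trace-$1$ projective measurements and exploiting that failed projections prepare $\neg\rho_x$ rather than an arbitrary state — has already been delegated to Lemma \ref{dwLemma} and identity (\ref{negRho}).
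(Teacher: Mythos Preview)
Your proposal is correct and follows essentially the same route as the paper: the paper's proof is precisely the $\textbf{D1}^{\prime} \to \textbf{D2}^{\prime}$ discussion you cite, and you have simply made explicit the steering computation, the verification that $I[\mathbb{P}]=s$ via the reindexing $b \leftrightarrow \neg b$ on the $a=1$ block, and the constant-matching $C_I = C_W$. Two cosmetic remarks: the measurement $\{\rho_x,\, \openone-\rho_x\}$ is a POVM but is only \emph{projective} when $\rho_x$ is pure, and steering through $|\Phi^+\rangle$ prepares $\rho_x^T$ rather than $\rho_x$ on Bob's side --- both are harmless (take Alice's elements to be $\rho_x^T$, or absorb the transpose into Bob's operators) and the paper glosses over them as well.
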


It is straightforward to check that the following lemma holds:
\begin{lemma}
    \label{probImply}
    Let $a, b \in \{0, 1\}$, and let us assume that
    \begin{subequations}
        \begin{equation}
            P(a,b|x,y)+P(a,\neg b|x,y)=P(a|x,y)=P(a|x), \text{and} \nonumber
        \end{equation}
        \begin{equation}
            P(a,b|x,y)+P(\neg a,b|x,y)=P(b|x,y)=P(b|y), \nonumber
        \end{equation}
    \end{subequations}
    \textit{i.e.} the no-signaling principle, and that the outcomes of Bob are binary, namely
    \begin{equation}
        P(b|a,x,y)+P(\neg b|a,x,y)=1.
    \end{equation}
    Then we have the following implications:
    \begin{enumerate}
        \item If $P(a,b|x,y)=P(\neg a, \neg b|x,y)$ holds, then we have $P(a|x)=\frac{1}{2}$ and $P(b|a,x,y)+P(b|\neg a,x,y)=1$.
        \item If $P(b|a,x,y)+P(b|\neg a,x,y)=1$ holds, then we have $P(b|a,x,y)=P(\neg b|\neg a,x,y)$.
        \item If $P(a|x)=\frac{1}{2}$ and $P(b|a,x,y)=P(\neg b|\neg a,x,y)$ hold, then we have $P(a,b|x,y)=P(\neg a,\neg b|x,y)$.
    \end{enumerate}
\end{lemma}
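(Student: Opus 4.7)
The proof is a short chain of algebraic manipulations that rely essentially on Bayes' rule together with the three background assumptions (no-signaling, and binary outcomes for Bob, plus the implicit fact that Alice's outcomes are also binary so $P(a|x)+P(\neg a|x)=1$). I would take each of the three implications in turn, in the order they are stated.

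For part (1), I would start with the hypothesis $P(a,b|x,y)=P(\neg a,\neg b|x,y)$ and sum over $b$. Using the no-signaling marginalization $P(a,b|x,y)+P(a,\neg b|x,y)=P(a|x)$ on both sides, this yields $P(a|x)=P(\neg a|x)$; since the two probabilities sum to $1$, we get $P(a|x)=\tfrac{1}{2}$. Next, I would convert the original equality into a statement about conditional probabilities by writing $P(a,b|x,y)=P(b|a,x,y)P(a|x)=\tfrac{1}{2}P(b|a,x,y)$ and likewise for the right hand side, giving $P(b|a,x,y)=P(\neg b|\neg a,x,y)$. Finally, applying the binary-outcome relation $P(\neg b|\neg a,x,y)=1-P(b|\neg a,x,y)$ gives the claimed $P(b|a,x,y)+P(b|\neg a,x,y)=1$.

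Part (2) is a one-line deduction: the binary-outcome hypothesis gives $P(\neg b|\neg a,x,y)=1-P(b|\neg a,x,y)$, and substituting this into the assumed $P(b|a,x,y)+P(b|\neg a,x,y)=1$ immediately yields $P(b|a,x,y)=P(\neg b|\neg a,x,y)$. For part (3), I would simply reverse the Bayes-rule computation from part (1): from $P(a|x)=\tfrac{1}{2}$ it follows that $P(\neg a|x)=\tfrac{1}{2}$ as well, so
\begin{equation}
P(a,b|x,y)=\tfrac{1}{2}P(b|a,x,y)=\tfrac{1}{2}P(\neg b|\neg a,x,y)=P(\neg a,\neg b|x,y),\nonumber
\end{equation}
where the middle equality uses the hypothesis.

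There is really no hard step here; the only thing to be careful about is keeping track of which of the three background assumptions is invoked at each moment, and not accidentally assuming the conclusion of one implication in proving another. In particular, in part (3) one must notice that the binary-outcome assumption is not needed, while in parts (1) and (2) it is essential for converting between $P(b\mid\cdot)$ and $P(\neg b\mid\cdot)$. The writeup is therefore just three short computations, each fitting comfortably in a couple of lines.
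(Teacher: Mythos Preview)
Your proposal is correct and matches the paper's treatment: the paper simply states that ``it is straightforward to check that the following lemma holds'' and gives no proof, so your short chain of Bayes-rule and marginalization identities is exactly the kind of verification the authors had in mind.
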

From this lemma we get that the condition $P(a,b|x,y)=P(\neg a, \neg b|x,y)$ is more restrictive than $P(a|x)=\frac{1}{2}$. From this we conjecture that for any $s$
\begin{equation}
    \mathbb{P}_{SDI}^{(P)}(s) \subseteq \mathbb{P}_{DI,cond}(s) \subseteq \mathbb{P}_{DI}(s), \nonumber
\end{equation}
where the sets are defined in theorems \ref{DWtoBItheorem} and \ref{ZeroSumTrm}. Thus the theorem \ref{ZeroSumTrm} refines the results of the theorem \ref{DWtoBItheorem} for the case of binary zero-summing dimension witnesses.

\begin{figure*}[!htbp]
        \centering

        \subfloat[T2]{\includegraphics[width=0.45\textwidth]{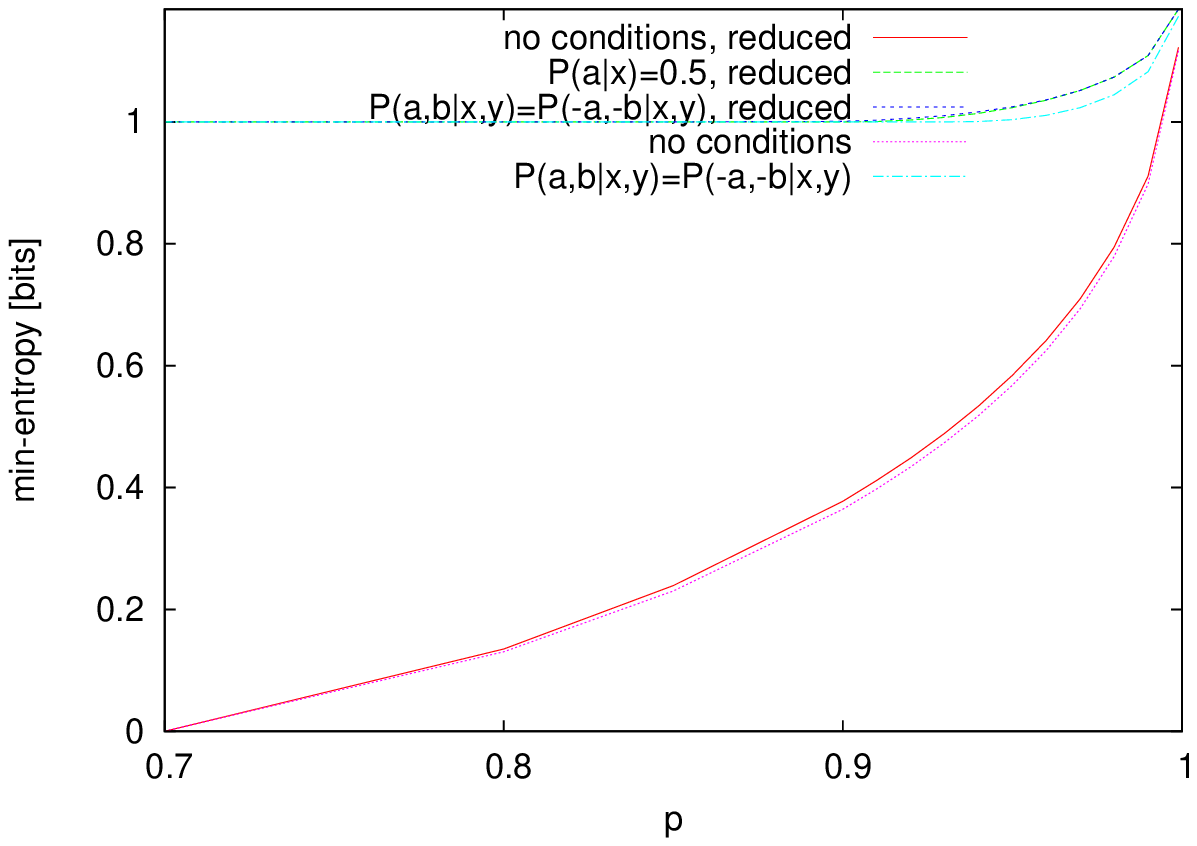}\label{fig:T2_NPA_P}}
        \subfloat[T3]{\includegraphics[width=0.45\textwidth]{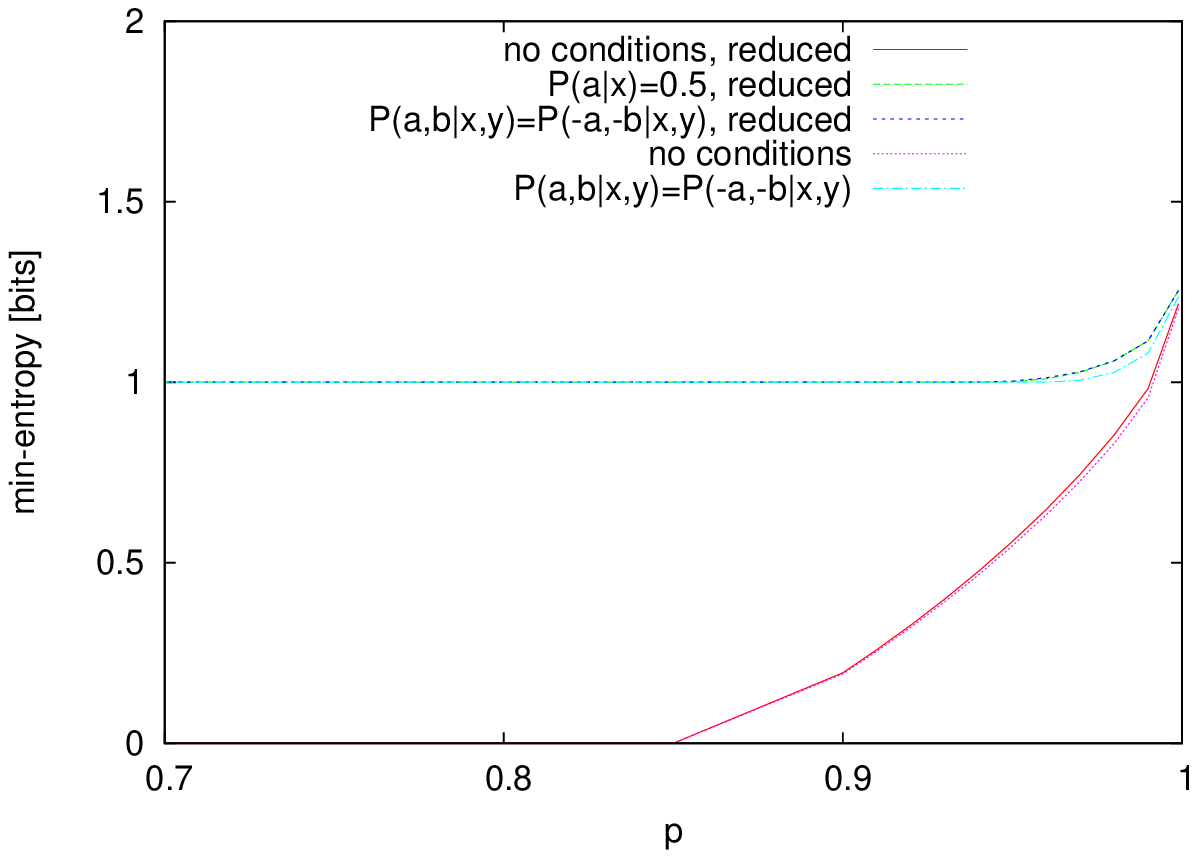}\label{fig:T3_NPA_P}}

        \subfloat[BC3]{\includegraphics[width=0.45\textwidth]{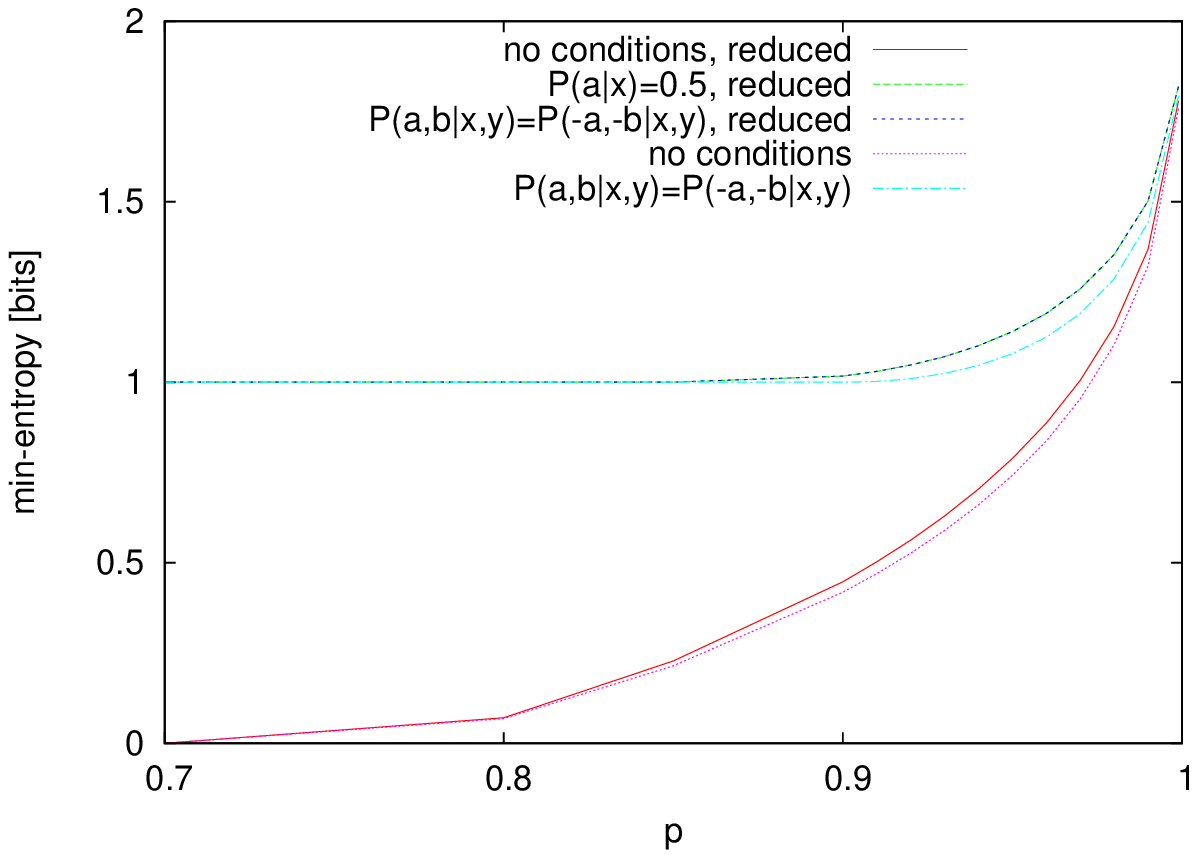}\label{fig:BC3_NPA_P}}
        \subfloat[modCHSH]{\includegraphics[width=0.45\textwidth]{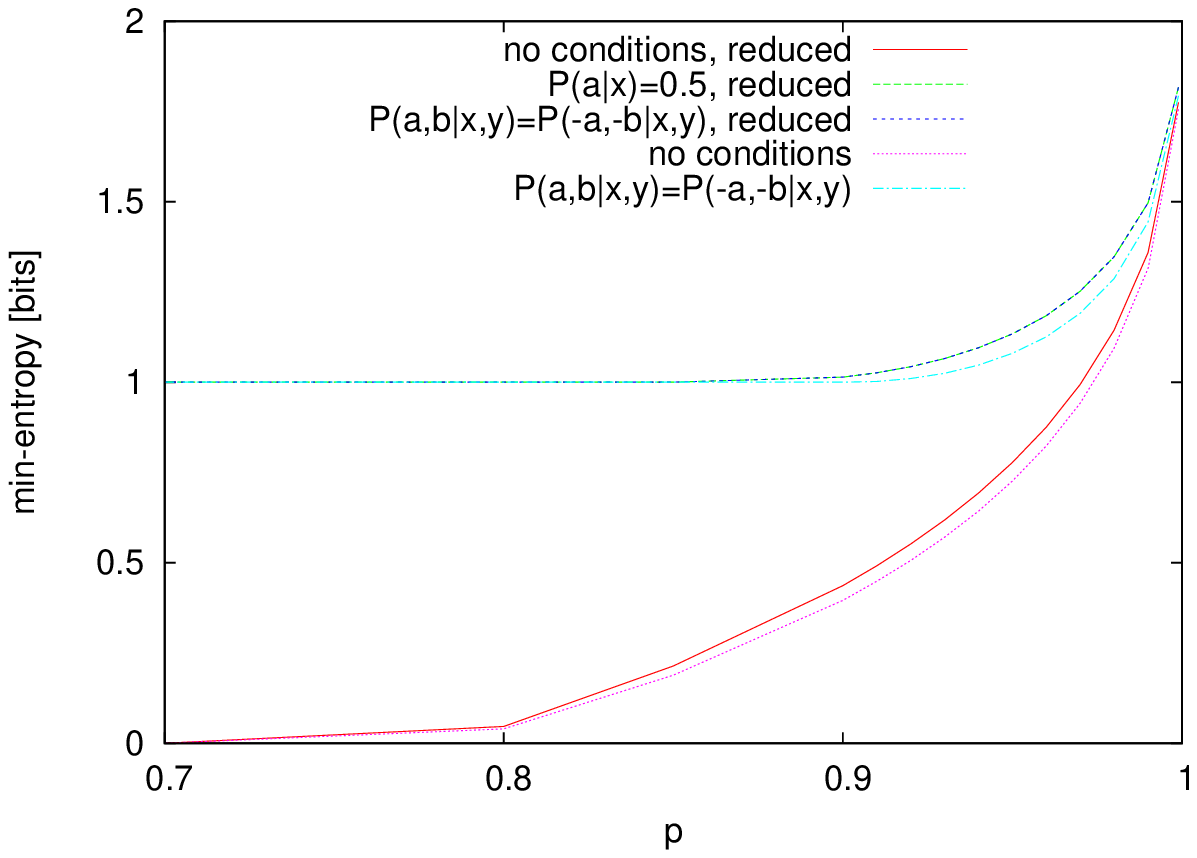}\label{fig:modCHSH_NPA_P}}

        \caption{(Color on-line) Lower-bounds via SDP on min-entropy certified by different Bell inequalities for different levels of noise. Several situations are considered. This illustrates the relations summarized in the lemma \ref{probImply}. \textit{Reduced} Bell inequalities refer to lower-bounds on reduced symmetric dimension witnesses (see the section \ref{sec:symDW}) for the strategy P (see the section \ref{sec:binaryDW}). These Bell operators are given by formulas (\ref{T2BI}), (\ref{T3BI}), (\ref{BC3BI}) and (\ref{modCHSHBI}). We observe that using reduced dimension witnesses provides an advantage in the terms of certifiable randomness. Bell operators that may be used for lower-bounding the randomness full symmetric dimension witnesses are given by formulas (\ref{T2BIfull}), (\ref{T3BIfull}), (\ref{BC3BIfull}) and (\ref{modCHSHBIfull}). Recall that the theorem \ref{DWtoBItheorem} and a remark below it, the lower-bound for the randomness of a dimension witness is given by the randomness of the Bell inequality minus $\log_2{d}$. This plot refers to the case with $d = 2$, thus these methods that give value below $\log_2{2} = 1$ are not feasible for the given value of $p$. \label{fig:PStrategyLower}}
\end{figure*}

\begin{figure*}[!htbp]
        \centering

        \subfloat[T2]{\includegraphics[width=0.45\textwidth]{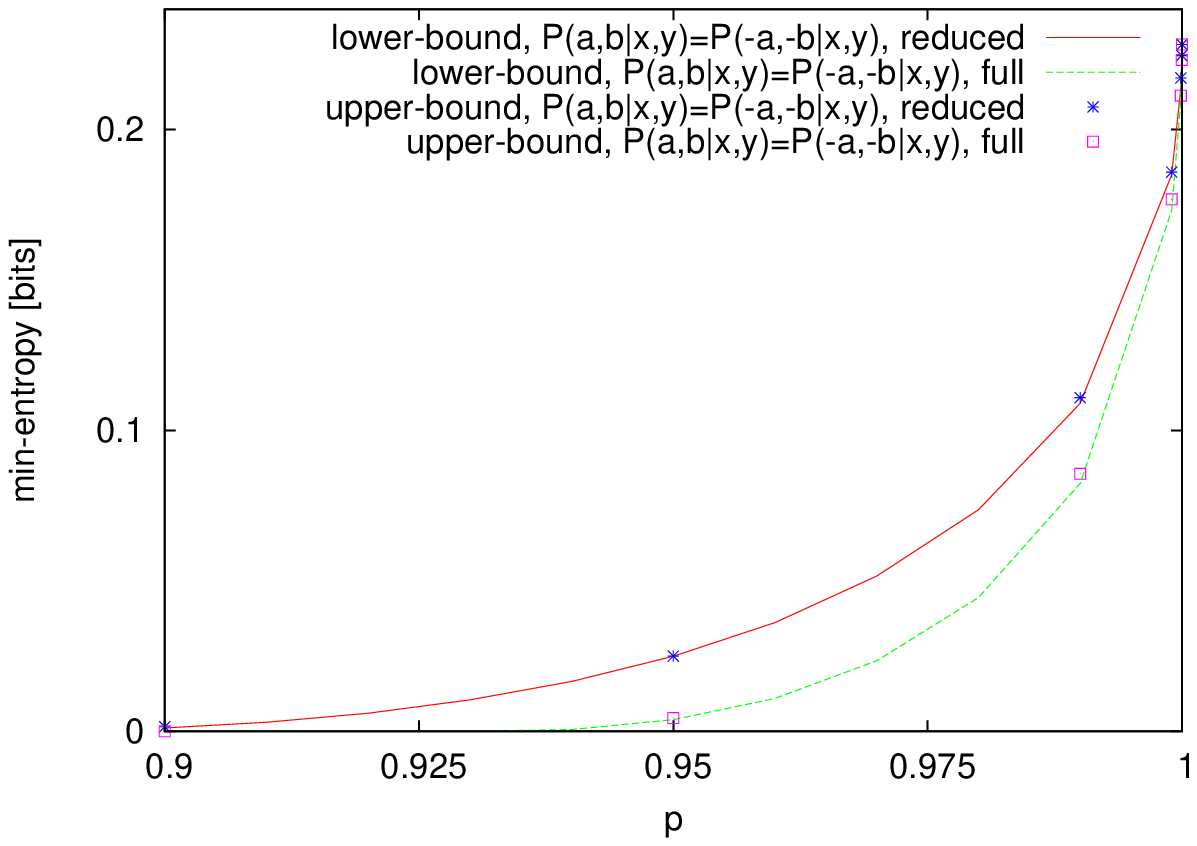}\label{fig:T2_cert_P}}
        \subfloat[T3]{\includegraphics[width=0.45\textwidth]{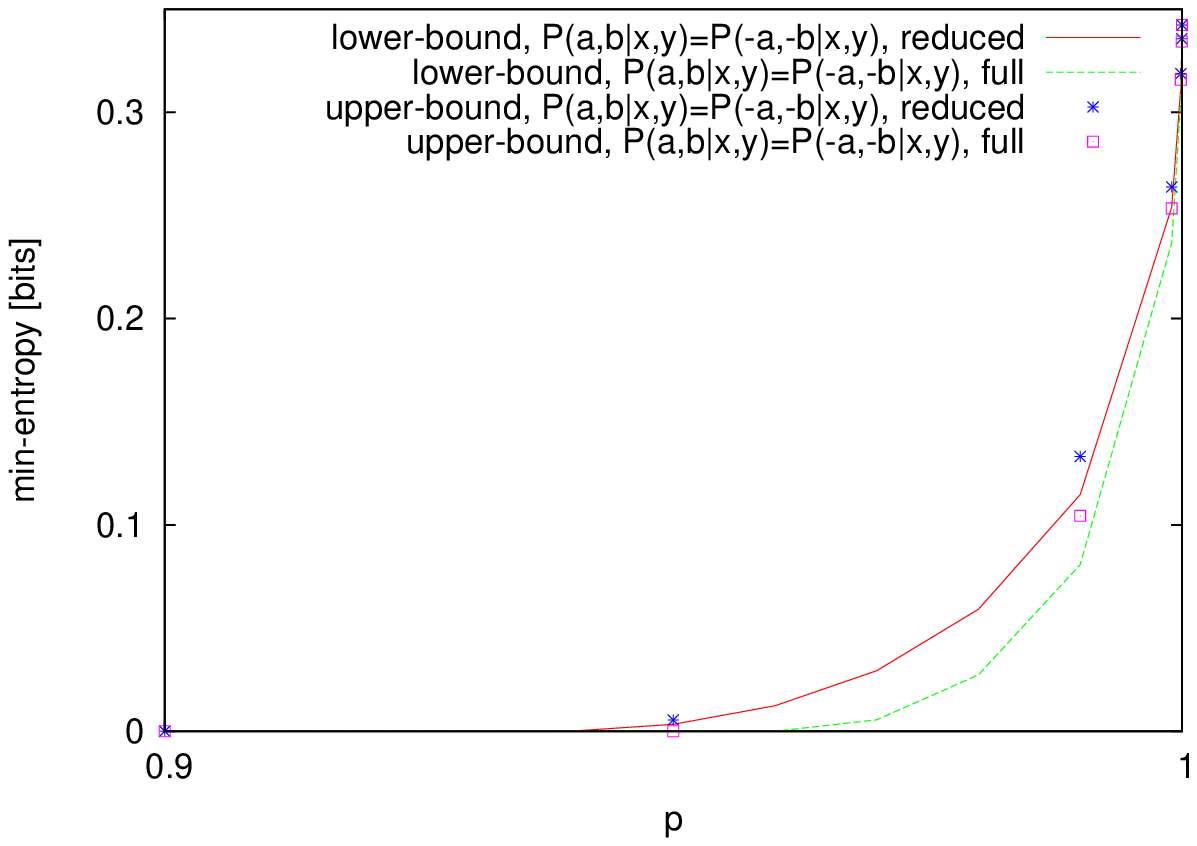}\label{fig:T3_cert_P}}

        \subfloat[BC3]{\includegraphics[width=0.45\textwidth]{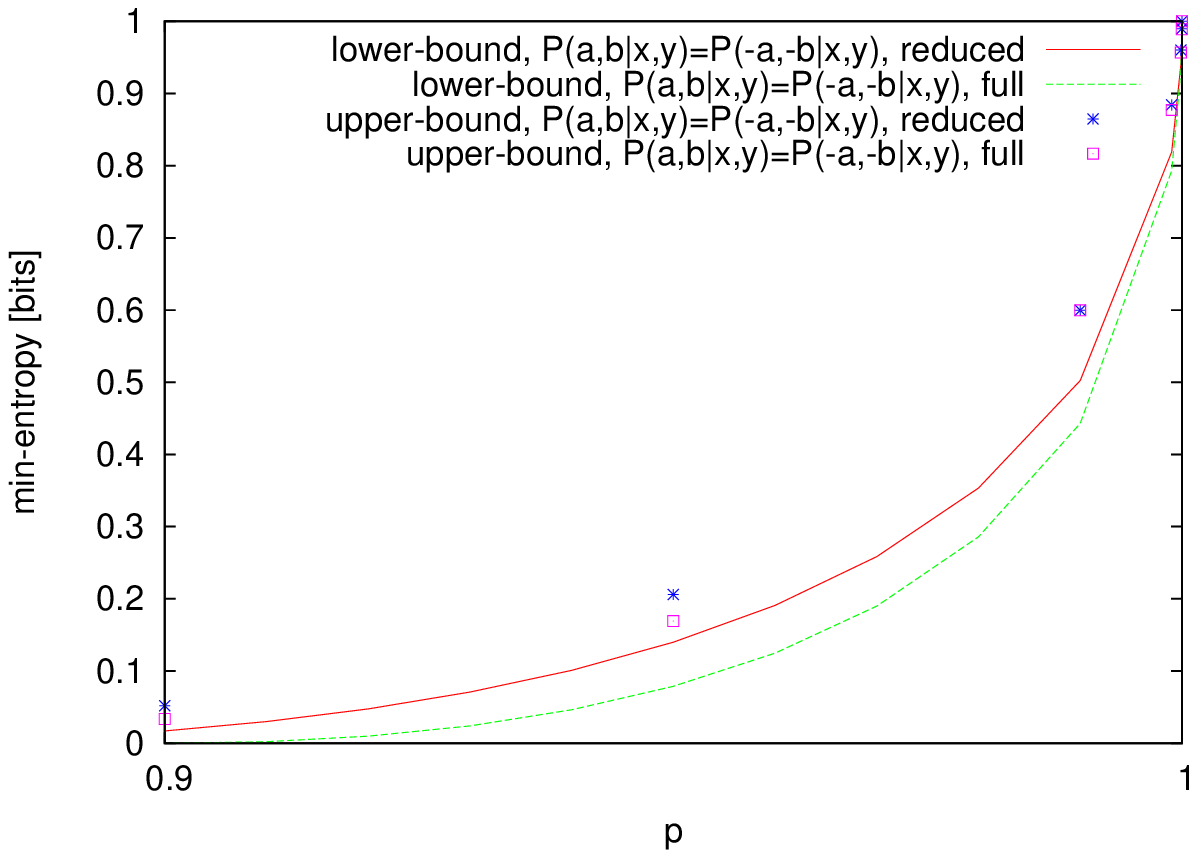}\label{fig:BC3_cert_P}}
        \subfloat[modCHSH]{\includegraphics[width=0.45\textwidth]{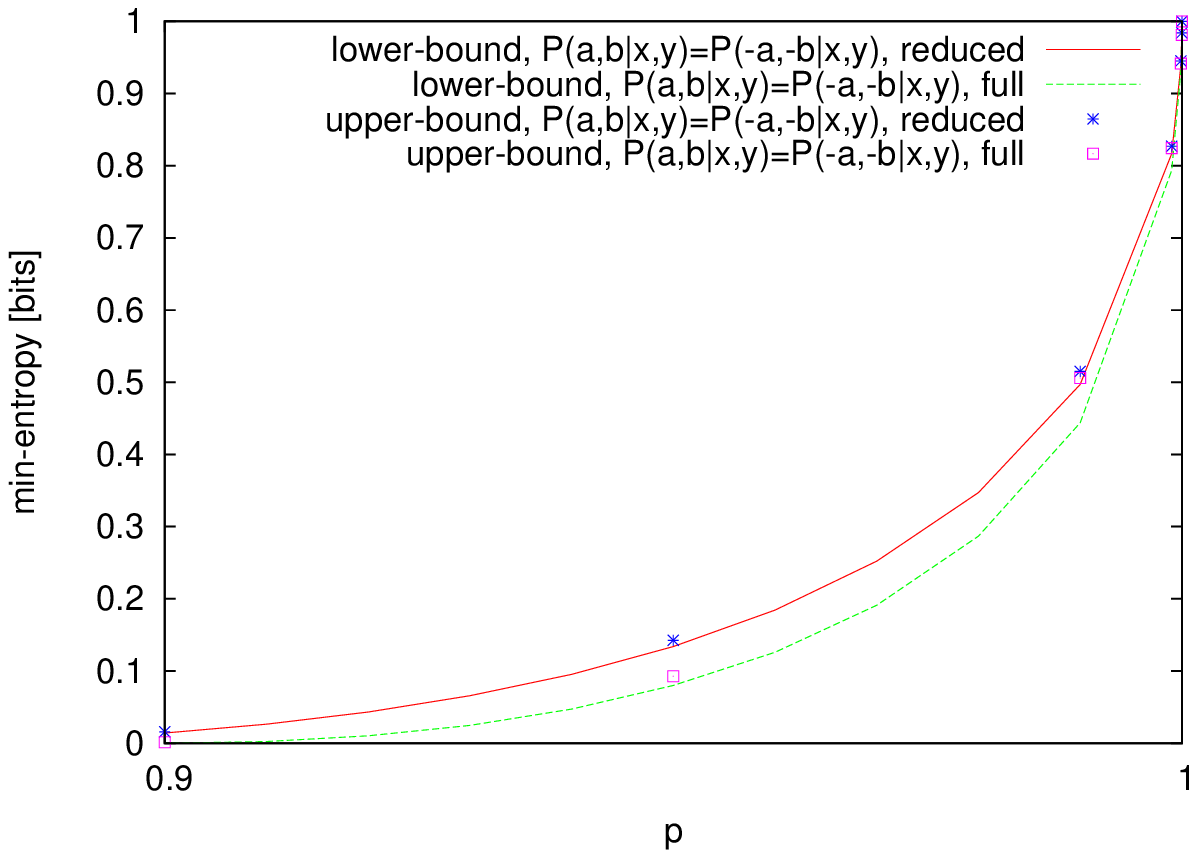}\label{fig:modCHSH_cert_P}}

        \caption{(Color on-line) Numerical lower-bounds via SDP and upper-bounds on the randomness certified for the strategy P (see the section \ref{sec:binaryDW}), for both reduced (see the section \ref{sec:symDW}) and full certificates and different Bell operators and dimension witnesses. Here we observe even more advantage for the reduced versions.\label{fig:PStrategyLowerUpper}}
\end{figure*}

\begin{figure*}[!htbp]
        \centering

        \subfloat[T2]{\includegraphics[width=0.45\textwidth]{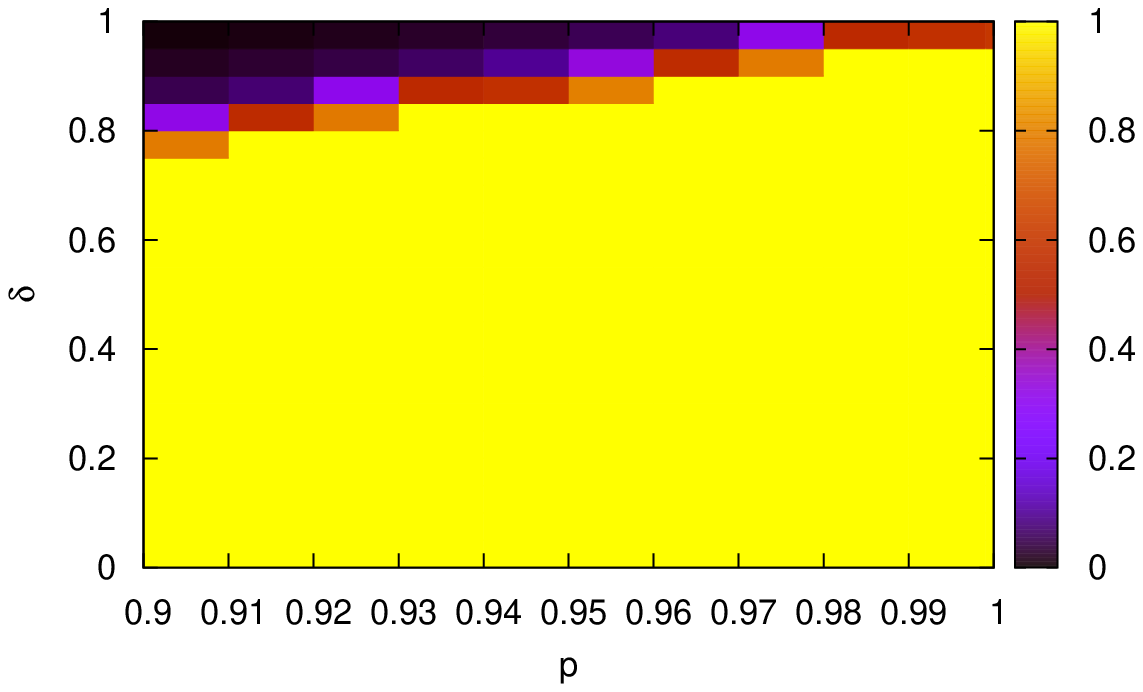}\label{fig:T2_NPA_delta}}
        \subfloat[T3]{\includegraphics[width=0.45\textwidth]{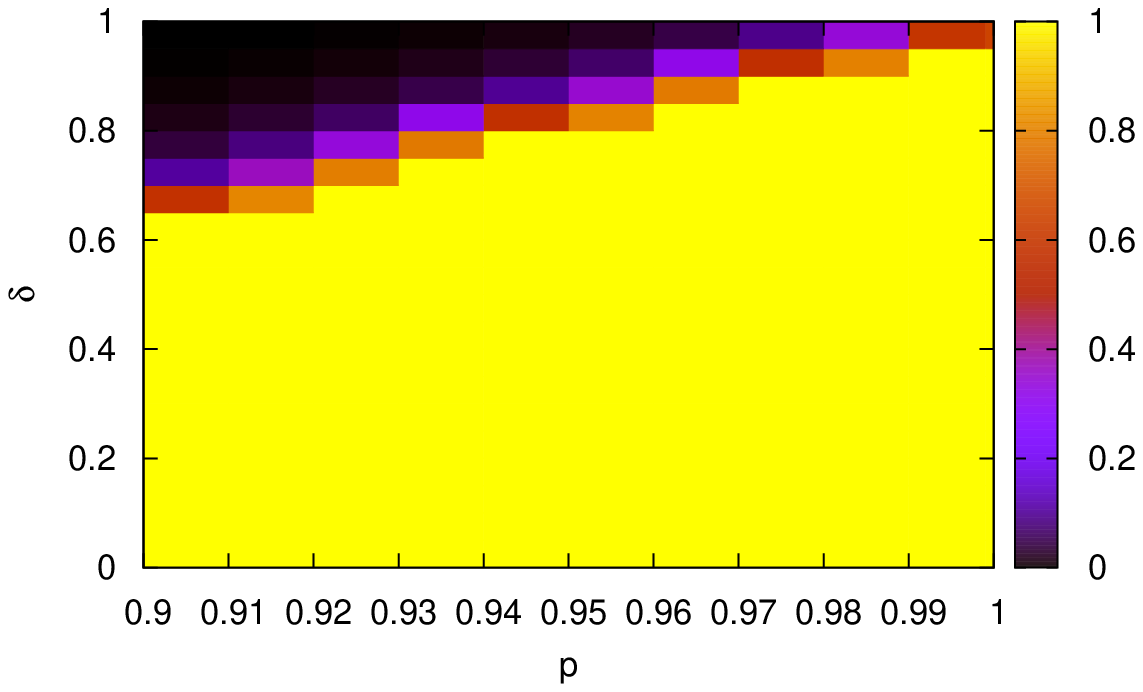}\label{fig:T3_NPA_delta}}

        \subfloat[BC3]{\includegraphics[width=0.45\textwidth]{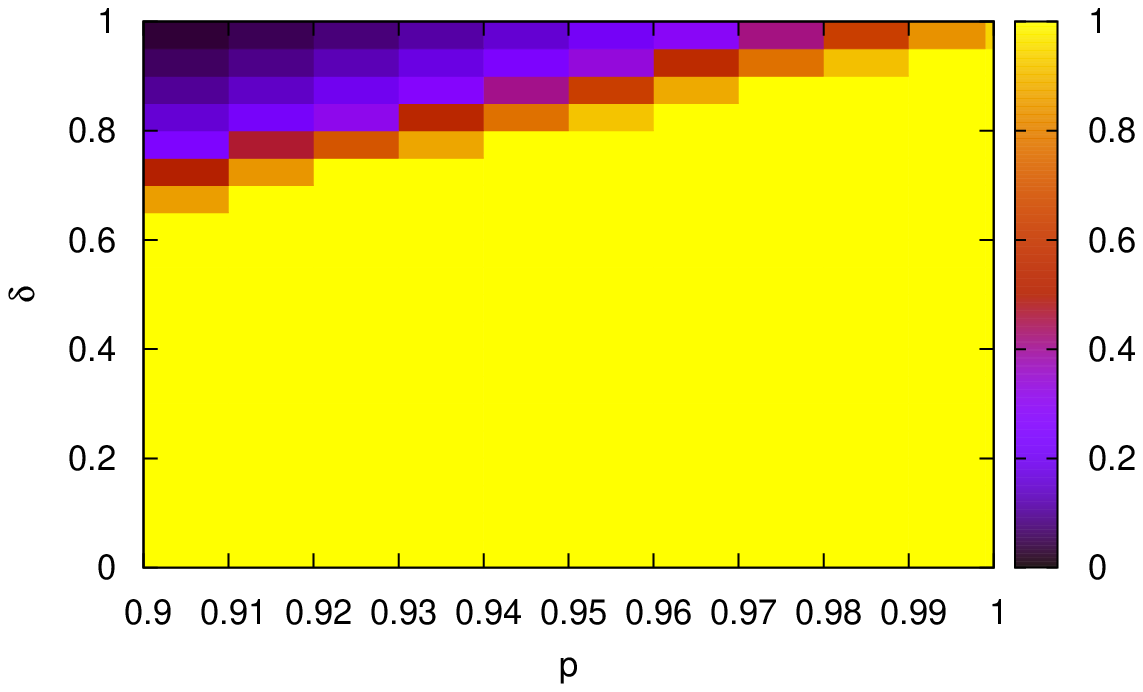}\label{fig:BC3_NPA_delta}}
        \subfloat[modCHSH]{\includegraphics[width=0.45\textwidth]{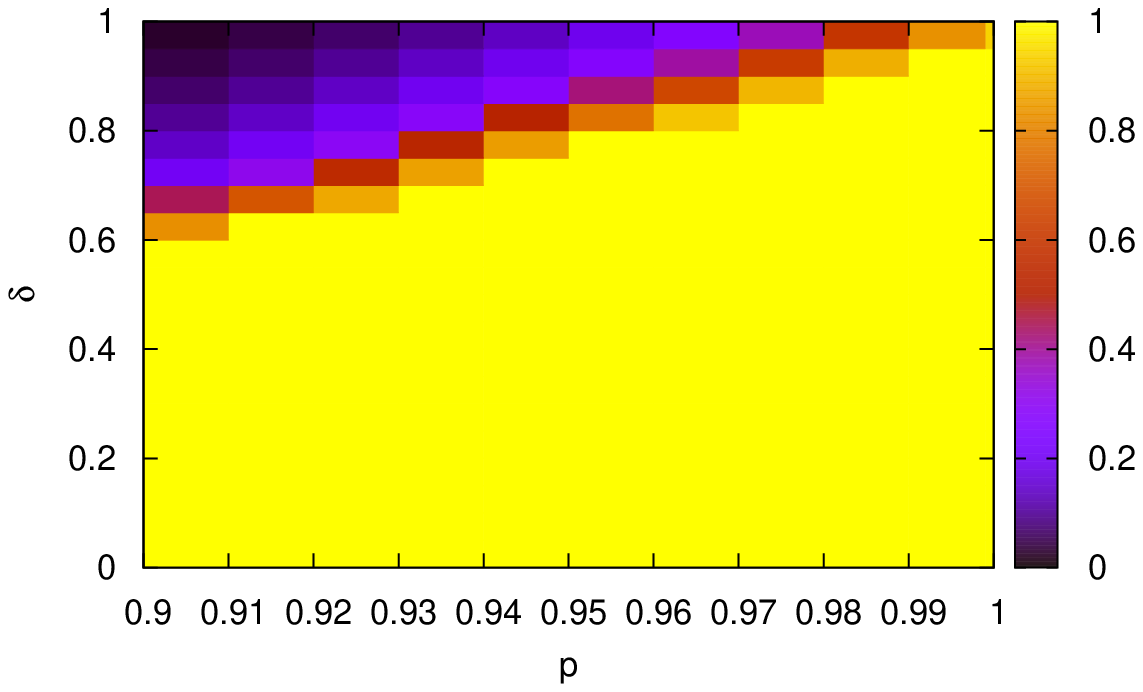}\label{fig:modCHSH_NPA_delta}}

        \caption{(Color on-line) Lower-bounds via SDP on the certified randomness in semi-device-independent scenario for both reduced dimension witnesses (see the section \ref{sec:symDW}) when the untrusted vendor uses the mixed strategy with different values of the parameter $\delta$ (see the section \ref{sec:binaryDW}). If certain value of a dimension witness is impossible to be achieved with given $\delta$, then, since the eavesdropper cannot mislead us this way, the value $1$ is put.\label{fig:MixedStrategyLowerDelta}}
\end{figure*}

\begin{figure*}[!htbp]
        \centering

        \subfloat[T2]{\includegraphics[width=0.45\textwidth]{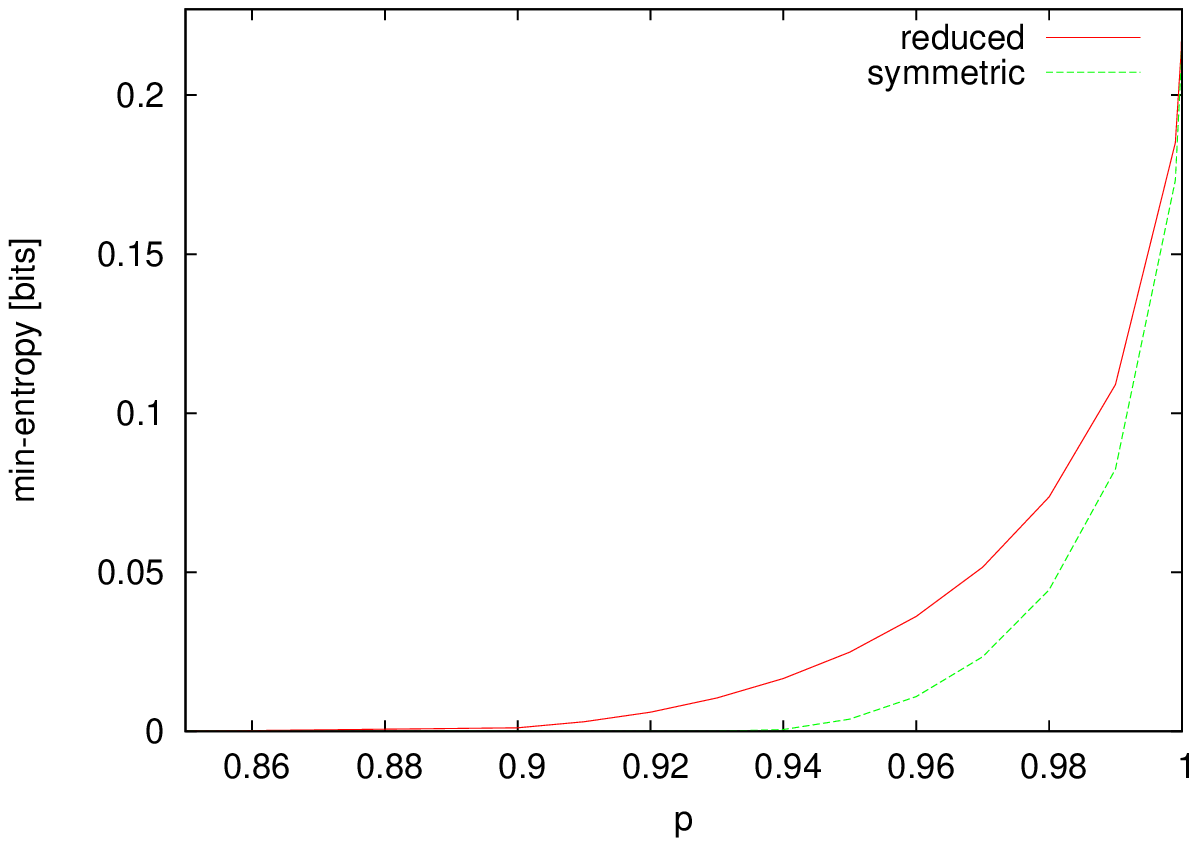}\label{fig:T2_cert}}
        \subfloat[T3]{\includegraphics[width=0.45\textwidth]{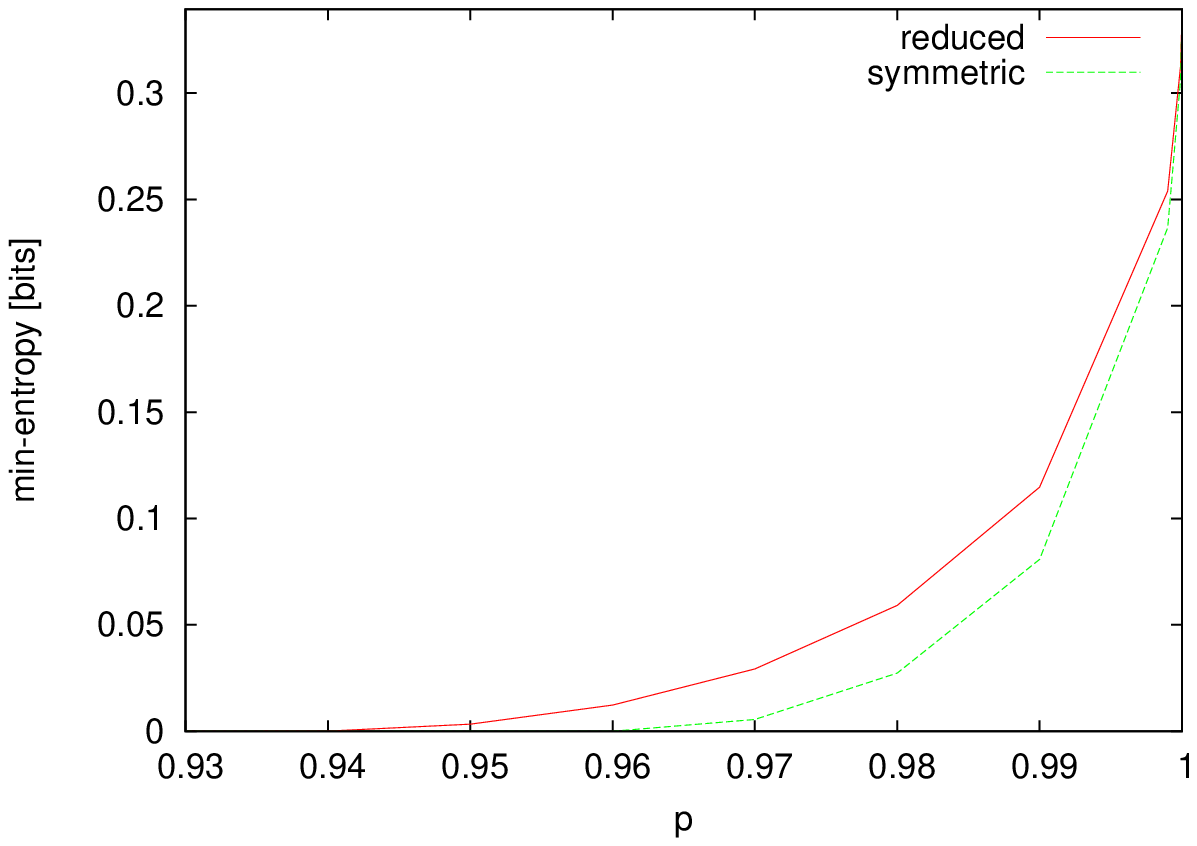}\label{fig:T3_cert}}

        \subfloat[BC3]{\includegraphics[width=0.45\textwidth]{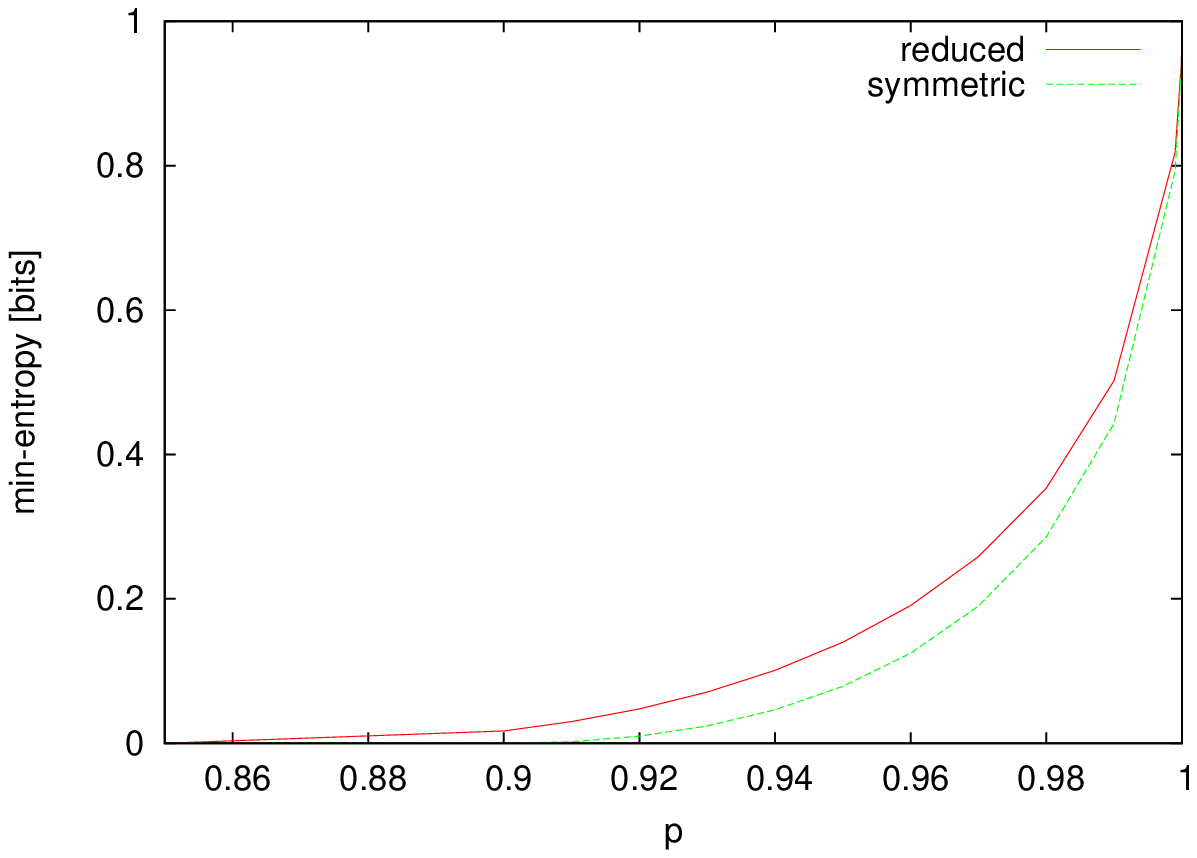}\label{fig:BC3_cert}}
        \subfloat[modCHSH]{\includegraphics[width=0.45\textwidth]{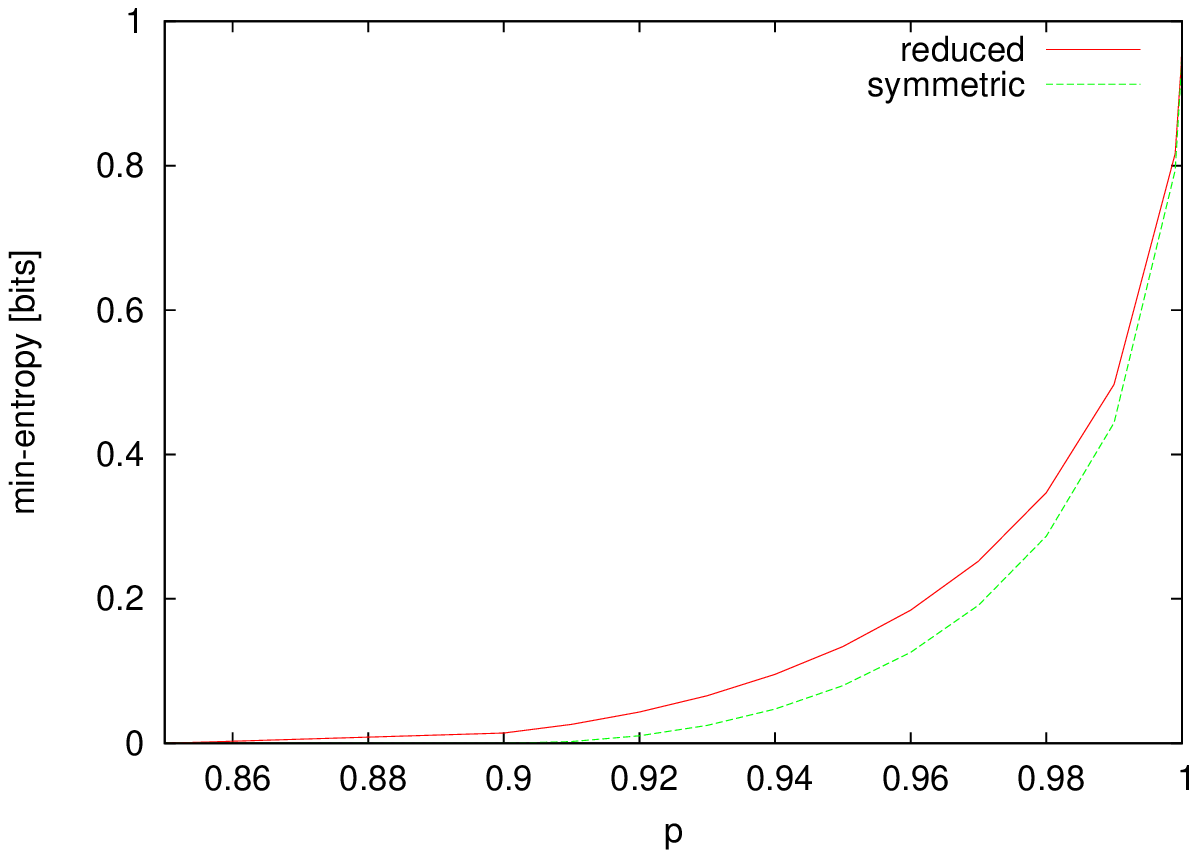}\label{fig:modCHSH_cert}}

        \caption{(Color on-line) Lower-bounds via SDP on the certified randomness for both reduced and symmetric dimension witnesses (see the section \ref{sec:symDW}) when the untrusted vendor uses the mixed strategy with the optimal value of the parameter $\delta$ (see the section \ref{sec:binaryDW}) which occurs to be $1$ in all cases.\label{fig:MixedStrategyLower}}
\end{figure*}

Figures \ref{fig:PStrategyLower} and \ref{fig:PStrategyLowerUpper} show examples lower- and upper-bounds for min-entropy certified when the untrusted vendor uses the strategy P. Figures \ref{fig:MixedStrategyLowerDelta} and \ref{fig:MixedStrategyLower} show lower-bounds for the certified min-entropy in case when the untrusted vendor uses the mixed strategy. All lower-bounds are calculated via semi-definite programs with the NPA hierarchy, using interior point method with the SeDuMi solver \cite{SeDuMi102,IntPoint}. The upper-bounds have been obtained by finding explicit representations of states and measurements. This optimization has been carried over pure states and projective measurements, and is not guarantied to reach global minima, in contrast to the semi-definite programming method.

Interestingly, in all protocols considered in the figure \ref{fig:MixedStrategyLower}, it is optimal for the adversary to use $\delta = 1$, \textit{i.e.} using the mixed strategy gives no gain comparing to the strategy P.

\section{Symmetric dimension witnesses}
\label{sec:symDW}

Let us introduce the following definition:
\begin{definition}
    \label{symDWdef}
    A dimension witness $W$ of the form (\ref{DW}) with the set of Alice's settings $\bar{X}$ of even size, and $\bar{B} = \{0,1\}$, is \textit{symmetric}, if there exist an surjective automorphism $\phi: \bar{X} \rightarrow \bar{X}$ with $\phi(x) \neq x$ and $\beta_{b,x,y} = -\beta_{b,\phi(x),y} = -\beta_{\neg b,x,y}$.

    For a set $\bar{\chi} \subset \bar{X}$ we define
    \begin{equation}
        W_{\bar{\chi}} \equiv \sum_{b \in \{0,1\}} \sum_{x \in \bar{\chi}} \sum_{y \in \bar{Y}} \beta_{b,x,y} P(b|x,y). \nonumber
    \end{equation}

    A set $\bar{\chi} \subset \bar{X}$ satisfying $\bar{\chi} \cap \phi(\bar{\chi}) = \emptyset$, and $\bar{\chi} \cup \phi(\bar{\chi}) = \bar{X}$ is called a half of $\bar{X}$.

    If a set $\bar{\chi}$ is a half, then $W_{\bar{\chi}}$ is called a dimension witness reduced with respect to $\chi$. $\phi$ and $\chi$ may be omitted if it is obvious which automorphism or set is considered.
\end{definition}

If a dimension witness is symmetric, then there is a way to reduce the size of $\bar{X}$, whilst the obtained dimension witness can certify at least the same amount of randomness, as the initial one.

The following theorem is an immediate result of the theorem \ref{ZeroSumTrm} and the lemma \ref{dwLemma}.2:
\begin{theorem}
    \label{symAddCond}
    For a SDI protocol using the strategy P with a symmetric dimension witness that attains the value of the security parameter $s$ on a Hilbert space of the dimension $2$ and certifies the randomness $r$, the same value is still possible to be attained and certifies at least the same randomness, if we impose an additional condition that $\rho_x = \openone - \rho_{\phi(x)}$, which implies $P(b|x,y)=P(\neg b|\phi(x),y)$.
\end{theorem}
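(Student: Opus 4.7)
The plan is to show that the restricted strategy class (those obeying $\rho_x=\openone-\rho_{\phi(x)}$) still contains strategies attaining witness value $s$. The ``certifies at least the same randomness'' part then follows for free: the certified min-entropy is a minimum over admissible strategies, and restricting to a non-empty subclass of strategies can only raise this minimum.

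Fix an arbitrary admissible strategy $(\{\rho_x\},\{M^b_y\})$ with $W=s$, trace-$1$ projective measurements, and dimension $2$. First observe that a symmetric witness is automatically zero-summing: from $\beta_{b,x,y}=-\beta_{\neg b,x,y}$ one obtains $\sum_b\beta_{b,x,y}=0$, hence $\sum_{b,x,y}\beta_{b,x,y}=0$, so the hypotheses of lemma~\ref{dwLemma}.2 are satisfied. Introduce the auxiliary states $\rho'_x\equiv\openone-\rho_{\phi(x)}$, which are valid qubit states in dimension $2$. A short calculation shows $W[\{\rho'_x\},\{M^b_y\}]=s$: the relabelling $\rho\mapsto\rho\circ\phi$ reverses the sign of the non-constant part of $W$ by virtue of $\beta_{b,\phi(x),y}=-\beta_{b,x,y}$, and the subsequent negation $\sigma\mapsto\openone-\sigma$ reverses it once more via lemma~\ref{dwLemma}.2, so the composition of the two operations preserves $W$ (the constant $C_W$ is untouched by either). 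I then form the convex combination $\tilde\rho_x\equiv\tfrac12\rho_x+\tfrac12(\openone-\rho_{\phi(x)})$; each $\tilde\rho_x$ is a valid state and, using that $\phi$ is an involution, $\openone-\tilde\rho_{\phi(x)}=\tilde\rho_x$, so the constraint of the theorem holds. Linearity of $W$ in the states combined with the preceding observation gives $W[\{\tilde\rho_x\},\{M^b_y\}]=\tfrac12 s+\tfrac12 s=s$, completing the attainability proof. The parenthetical implication $P(b|x,y)=P(\neg b|\phi(x),y)$ follows in one line from $M^{\neg b}_y=\openone-M^b_y$ and $\Tr M^b_y=1$ in dimension $2$.

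The only delicate step is the sign bookkeeping for $W[\{\rho'_x\},\{M^b_y\}]=s$: one must combine the minus sign coming from $\beta_{b,\phi(x),y}=-\beta_{b,x,y}$ with the minus sign coming from $\sigma\mapsto\openone-\sigma$ in lemma~\ref{dwLemma}.2, and verify that $C_W$ is genuinely preserved by each operation so that no spurious additive shift appears. Assuming $\phi$ is an involution (as is the case in the motivating examples) is used only to make the symmetrization $\tilde\rho_x=\tfrac12\rho_x+\tfrac12(\openone-\rho_{\phi(x)})$ automatically satisfy the required constraint; if $\phi$ had higher order, one would average instead over the orbit of the group generated by $\phi$ and negation, and the argument would go through with the same sign-cancellation mechanism.
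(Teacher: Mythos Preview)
Your argument is correct and follows essentially the same route the paper has in mind: the key tool is lemma~\ref{dwLemma}.2 (negation flips the sign of the non-constant part of $W$), combined with the symmetry relation $\beta_{b,\phi(x),y}=-\beta_{b,x,y}$ to flip it back, so that the map $\{\rho_x\}\mapsto\{\openone-\rho_{\phi(x)}\}$ preserves the witness value. The paper states the theorem as ``immediate from theorem~\ref{ZeroSumTrm} and lemma~\ref{dwLemma}.2'' without spelling out the mechanism; your explicit symmetrization $\tilde\rho_x=\tfrac12\rho_x+\tfrac12(\openone-\rho_{\phi(x)})$ is exactly the natural way to turn that invariance into a constrained strategy attaining $s$, and your subset argument for the min-entropy bound is the obvious (and correct) one. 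The paper's reference to theorem~\ref{ZeroSumTrm} is really to the trace-$1$ computation (equation~(\ref{negRho})) used in its proof, which is what you reproduce directly to get $P(b|x,y)=P(\neg b|\phi(x),y)$; so you are not omitting anything by not citing theorem~\ref{ZeroSumTrm} itself. Your remark on the involution assumption is well placed: the definition of a symmetric witness does not literally require $\phi^2=\mathrm{id}$, but the sign relation $\beta_{b,x,y}=-\beta_{b,\phi(x),y}$ forces $\phi$ to have even order on every orbit for a non-trivial witness, and your orbit-averaging fix then goes through since each iterate of $T:\{\rho_x\}\mapsto\{\openone-\rho_{\phi(x)}\}$ preserves $W$ and produces valid qubit states.
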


Simply speaking this theorem says that symmetric dimension witnesses posses some kind of degree of freedom that does not increase the range of values possible to be attained, but allows adversary to ``distribute'' the value of the witness among the states in such a way that misleads about the reliability of the device. The proposed method shows a way to remove this freedom.

\subsection{Obtaining and reduction of a symmetric dimension witness}

This subsection shows how to transform a symmetric dimension witness to a reduced one.

It is possible to use a known Bell inequality to obtain a new dimension witness. Examples of such protocols, $T2$, $T3$, $BC3$ and $modCHSH$, are described below in the section \ref{sec:explicitExamples}. From a Bell inequality of the form
\begin{equation}
    \label{symBI}
    \begin{aligned}
        \sum_{x,y} \dot{\alpha}_{x,y} &\left( \frac{1}{2 p_{0,x}} \left( P(0,0|x,y) - P(0,1|x,y) \right) \right.\\
            + &\left. \frac{1}{2 p_{1,x}} \left( P(1,1|x,y) - P(1,0|x,y) \right) \right)
    \end{aligned}
\end{equation}
(where $p_{0,x}+p_{1,x}=1$), using the method from the section \ref{sec:BItoDW}, we obtain a symmetric dimension witness of the form (\ref{DW}), with $\beta_{0,x,y} = \dot{\alpha}_{x,y}$, and $\beta_{1,x,y} = -\dot{\alpha}_{x,y}$. For the new SDI protocol, we assume that $a$ is chosen randomly by Alice, with the distribution $P(a|x) \equiv p_{a,x}$.

Note that Bell inequalities in the correlation form (see the equation (\ref{BIhat})), are a special case of the inequalities of the form (\ref{symBI}) with $p_{a,x} = \frac{1}{2}$ and $\dot{\alpha}_{x,y} = \bar{\alpha}_{x,y}$, which means that it is always possible to obtain a symmetric dimension witness from a correlation-based Bell inequality. Then $C(x,y)$ turns into
\begin{equation}
    \label{CtoWfull}
    \begin{aligned}
        W^{\prime}(x, y) &\equiv \frac{1}{2} \left( P(0|(0,x),y)+P(1|(1,x),y) \right. \\
        &\left. -P(1|(0,x),y)-P(0|(1,x),y) \right) \\
        &= P(0|(0,x),y) - P(0|(1,x),y).
    \end{aligned}
\end{equation}
It is easy to see, that a dimension witness which is a linear combination of expressions (\ref{CtoWfull}), is symmetric.

We define $\phi((a,x)) \equiv (\neg a,x)$ and $\chi \equiv \{(0,x): x \in X\} \subseteq \{0,1\} \times X$. The condition $P(b|(a,x),y)=P(\neg b|(\neg a,x),y)$ allows us to take
\begin{equation}
    \label{CtoW}
    W(x, y) \equiv 2 P(0|(0,x),y) - 1 \equiv 2 P(0|x,y) - 1,
\end{equation}
instead of $W^{\prime}(x, y)$ from the equation (\ref{CtoWfull}), which is an example of the reduction.

Note that using the method of reduction of a symmetric dimension witness the number of states used by Alice is reduced twice without loss of of ability to certify both the randomness and the dimension.

On the other hand every symmetric dimension witness is a linear combination of expressions $D(x,y) \equiv P(0|x,y) - P(1|x,y)$ that refers in DI scenario to the expression $2 P(0,0|x,y) - 2 P(0,1|x,y)$. Assuming that the dimension of the Hilbert space is $2$, and the eavesdropper uses the strategy P, we get from the equation (\ref{negAnegB}) that $2 P(0,0|x,y) - 2 P(0,1|x,y) = C(x,y)$.

\section{Examples}
\label{sec:explicitExamples}

In this section we give five examples of applications of the methods presented above. Four of them, B, C, D, and E, concern Bell inequalities in the correlation form and symmetric dimension witnesses.

All figures are plotted with respect to a relative parameter $p$. The value $p = 1$ refers the case when the maximal value of the relevant Bell inequality or dimension witness is achieved. Values $p < 1$ relate to the situation with noise, when the attained value is equal to the maximum multiplied by $p$.

\subsection{CGLMP}
\label{sub:CGLMP}

In the first example we start with CGLMP inequality introduced in \cite{CGLMP}. Both Alice and Bob have two measurement settings with three outcomes. It has the following form
\begin{equation}
    \begin{aligned}
        &  P(0,0|1,1)-P(0,2|1,1)+P(0,0|1,2)-P(0,2|1,2) \\
        & -P(1,0|1,1)+P(1,1|1,1)-P(1,0|1,2)+P(1,1|1,2) \\
        & -P(2,1|1,1)+P(2,2|1,1)-P(2,1|1,2)+P(2,2|1,2) \\
        & -P(0,0|2,1)+P(0,1|2,1)+P(0,0|2,2)-P(0,2|2,2) \\
        & -P(1,1|2,1)+P(1,2|2,1)-P(1,0|2,2)+P(1,1|2,2) \\
        & +P(2,0|2,1)-P(2,2|2,1)-P(2,1|2,2)+P(2,2|2,2). \nonumber
    \end{aligned}
\end{equation}
Using the heuristic method from the section \ref{sec:BItoDW} we obtain the following dimension witness:
\begin{equation}
    \label{CGLMPdw}
    \begin{aligned}
        &  P(0|1,1)-P(2|1,1)+P(0|1,2)-P(2|1,2) \\
        & -P(0|2,1)+P(1|2,1)-P(0|2,2)+P(1|2,2) \\
        & -P(1|3,1)+P(2|3,1)-P(1|3,2)+P(2|3,2) \\
        & -P(0|4,1)+P(1|4,1)+P(0|4,2)-P(2|4,2) \\
        & -P(1|5,1)+P(2|5,1)-P(0|5,2)+P(1|5,2) \\
        & +P(0|6,1)-P(2|6,1)-P(1|6,2)+P(2|6,2).
    \end{aligned}
\end{equation}
Applying the method from the section \ref{sec:DWtoBI}, we get the following expression, which may be used in a semi-definite program:
\begin{equation}
    \begin{aligned}
        &  P(0,0|1,1)-P(0,2|1,1)+P(0,0|1,2)-P(0,2|1,2) \\
        & -P(0,0|2,1)+P(0,1|2,1)-P(0,0|2,2)+P(0,1|2,2) \\
        & -P(0,1|3,1)+P(0,2|3,1)-P(0,1|3,2)+P(0,2|3,2) \\
        & -P(0,0|4,1)+P(0,1|4,1)+P(0,0|4,2)-P(0,2|4,2) \\
        & -P(0,1|5,1)+P(0,2|5,1)-P(0,0|5,2)+P(0,1|5,2) \\
        & +P(0,0|6,1)-P(0,2|6,1)-P(0,1|6,2)+P(0,2|6,2). \nonumber
    \end{aligned}
\end{equation}
The certified randomness for CGLMP is shown in the figure \ref{fig:CGLMP}.

\subsection{BC3}
\label{sub:BC3}

In the second example we start with a well known Braunstein-Caves inequality (denoted below $BC3$, it is a Bell inequality in the form (\ref{BIhat})) with three settings for each of the two parties, and convert it to a symmetric dimension witness with six prepared states. After reduction, we will obtain a dimension witness with three states, and show that the lower bounding Bell inequality is identical to the original $BC3$. 

$BC3$ inequality is of the form
\begin{equation}
    \label{BC3BI}
    \begin{aligned}
        BC_{3} & \equiv C(1, 1) + C(1, 2) + C(2, 2) \\
        & + C(2, 3) + C(3, 3) - C(3, 1),
    \end{aligned}
\end{equation}
with $\delta = 1$. For $BC_{3}$ we have $x, y \in \{1, 2, 3\}$. Thus we obtain a symmetric dimension witness with six states prepared by Alice and three measurements performed by Bob.

The explicit form of this symmetric dimension witness is
\begin{equation}
	\begin{aligned}
		& P(0|(0,1),1) - P(0|(1,1),1) + P(0|(0,1),2) \\
		& - P(0|(1,1),2) + P(0|(0,2),2) - P(0|(1,2),2) \\
		& + P(0|(0,2),3) - P(0|(1,2),3) + P(0|(0,3),3) \\
		& - P(0|(1,3),3)	- P(0|(0,3),1) + P(0|(1,3),1).
	\end{aligned} \nonumber
\end{equation}

Using the method for symmetric dimension witnesses from the section \ref{sec:symDW}, this may be transformed into a dimension witness with three states. We define $\phi((a,x)) \equiv (\neg a,x)$ and $\chi \equiv \{(0,x): x \in X\}  \subseteq \{0,1\} \times X$.

The explicit form of this reduced dimension witness is
\begin{equation}
	\begin{aligned}
		2 \left( P(0|1,1) + P(0|1,2) + P(0|2,2) \right. \\
		\left. + P(0|2,3) + P(0|3,3) - P(0|3,1) \right) - 4.
	\end{aligned} \nonumber
\end{equation}

Now, using the theorem \ref{ZeroSumTrm}, we go from this reduced dimension witness back to the Bell inequality that gives a lower-bounding relation. Assuming $P(a,b|x,y)=P(\neg a, \neg b|x,y)$, we get that the lower-bounding Bell inequality is exactly the initial Braunstein-Caves inequality. If we use a full dimension witness, then the Bell inequality used in lower-bounding with the theorem \ref{ZeroSumTrm} is
\begin{equation}
	\label{BC3BIfull}
	\begin{aligned}
		& \frac{1}{2} \left( C(1, 1) + C(1, 2) + C(2, 2) + C(2, 3) \right. \\
		& + C(3, 3) - C(3, 1) + C(4, 1) + C(4, 2) \\
		&\left. + C(5, 2) + C(5, 3) + C(6, 3) - C(6, 1) \right),
	\end{aligned}
\end{equation}
where $(0, 1) \equiv 1$, $(0, 2) \equiv 2$, $(0, 3) \equiv 3$, $(1, 1) \equiv 4$, $(1, 2) \equiv 5$, and $(1, 3) \equiv 6$.

In the figure \ref{fig:BC3_NPA_P}, the min-entropies certified with the Bell inequality $BC3$ with different additional conditions are plotted. The figure \ref{fig:BC3_cert_P} shows lower-bounds on the min-entropy certified in this SDI protocol, obtained by theorem \ref{DWtoBItheorem} from the NPA hierarchy with additional condition $P(a,b|x,y)=P(\neg a,\neg b|x,y)$. These values assume that the untrusted vendor uses the strategy P (see the section \ref{sec:binaryDW}). Plots relevant to the mixed strategy are shown on figures \ref{fig:BC3_NPA_delta} and \ref{fig:BC3_cert}.

\subsection{T3}
\label{sub:T3}

The third example starts with a dimension witness based on $3$ to $1$ quantum random access code \cite{QRAC,DW4}, and relates it, and its reduced version, to two Bell inequalities, where the second one is $T3$ introduced in \cite{HWL12}.

In $3$ to $1$ quantum random access code Alice encodes three bits by sending one of the $2^{3}$ states to Bob, who tries to guess one of them, performing one of three measurements. The average success probability of correctly guessing an arbitrarily chosen bit is directly related to the value of the following dimension witness:
\begin{equation}
	\label{T3DWfull}
	\sum_{x \in \bar{X}, y \in \bar{Y}} (-1)^{x_y} P(0|x,y),
\end{equation}
where $\bar{X} = \{000, \ldots, 111\}$, $\bar{Y} = \{0,1,2\}$. Its maximal value attainable with qubits is $4 \sqrt{3}$.

Taking $\phi(x)=\neg x$ (negation is meant here as bit-wise), $\bar{X} = \{00,01,10,11\}$ and $\bar{Y} = \bar{Y}$, we get the following reduced dimension witness
\begin{equation}
	\label{T3DW}
	\begin{aligned}
		& P(0|00,0)+P(0|01,0)+P(0|10,0)+P(0|11,0) \\
		& +P(0|00,1) +P(0|00,2) +P(0|01,1) -P(0|01,2) \\
		& -P(0|10,1) +P(0|10,2) -P(0|11,1) -P(0|11,2).
	\end{aligned}
\end{equation}
From this dimension witness, using the method from the section \ref{sec:symDW}, we get the following Bell operator:
\begin{equation}
	\label{T3BI}
	\begin{aligned}
		& C(1,1)+C(2,1)+C(3,1)+C(4,1) \\
		& +C(1,2) +C(1,3) +C(2,2) -C(2,3) \\
		& -C(3,2) +C(3,3) -C(4,2) -C(4,3).
	\end{aligned}
\end{equation}
If we do not reduce the dimension witness and use the formula (\ref{T3DWfull}) directly, we get the following Bell operator:
\begin{equation}
    \label{T3BIfull}
    \begin{aligned}
        T3^{\prime} \equiv & \frac{1}{2} \left( C(1,1)-C(5,1) +C(2,1)-C(6,1) \right. \\
        & +C(3,1)-C(7,1) +C(4,1)-C(8,1) \\
        & +C(1,2)-C(5,2) +C(1,3)-C(5,3) \\
        & +C(2,2)-C(6,2) -C(2,3)+C(6,3) \\
        & -C(3,2)+C(7,2) +C(3,3)-C(7,3) \\
        & \left. -C(4,2)+C(8,2) -C(4,3)+C(8,3) \right).
    \end{aligned}
\end{equation}
The Bell operator defined in the equation (\ref{T3BI}) is the one used in \cite{HWL12, HWL13, MP13}.

It is possible to calculate a lower-bound on the certified min-entropy, $H_{\infty}^{cert}(T3,x_0,y_0,s,d)$, with $x_0 = 1$, $y_0 = 1$ using the theorem \ref{ZeroSumTrm}, \textit{i.e.} via a semi-definite relaxation with a minimization on higher level over $\delta \in [0,1]$.

The figure \ref{fig:T3_NPA_P} shows the min-entropies certified with the Bell inequality $T3$ for different additional conditions. In the figure \ref{fig:T3_cert_P} lower-bounds on the certified min-entropy obtained by theorem \ref{DWtoBItheorem} from the NPA hierarchy with additional condition $P(a,b|x,y)=P(\neg a,\neg b|x,y)$ are plotted. These values assume that the untrusted vendor uses the strategy P (see the section \ref{sec:binaryDW}). Figures \ref{fig:T3_NPA_delta} and \ref{fig:T3_cert} contains the relevant data for the mixed strategy.

\subsection{T2}
\label{sub:T2}

A simple Bell inequality is obtained from the symmetric dimension witness of the $2$ to $1$ QRAC used in \cite{DW3,DW4}. It has the following form
\begin{equation}
    \label{T2DWfull}
    \begin{aligned}
        W^{\prime}(1, 1) + W^{\prime}(1, 2) + W^{\prime}(2, 1) - W^{\prime}(2, 2),
    \end{aligned}
\end{equation}
where $W^{\prime}$ is defined in the equation (\ref{CtoWfull}) and $\delta = 1$. The reduced form of this dimension witness is
\begin{equation}
	\label{T2DW}
	\begin{aligned}
		W(1, 1) + W(1, 2) + W(2, 1) - W(2, 2),
	\end{aligned}
\end{equation}
where $W$ is defined by the equation (\ref{CtoW}) and $\delta = 1$. Robustness of the reduced version has been already investigated in \cite{HWL13}, in the figure $4$. The randomness certified by these two dimension witnesses is lower-bounded by the values obtained with the following two Bell inequalities. For the dimension witness defined in the equation (\ref{T2DWfull}), we use a Bell inequality
\begin{equation}
    \label{T2BIfull}
    \begin{aligned}
        & \frac{1}{2} \left( C(1, 1) + C(1, 2) + C(2, 1) - C(2, 2) \right. \\
        & \left. + C(3, 1) + C(3, 2) + C(4, 1) - C(4, 2) \right),
    \end{aligned}
\end{equation}
and for the dimension witness from the equation (\ref{T2DW}),
\begin{equation}
    \label{T2BI}
    \begin{aligned}
        T2 \equiv C(1, 1) + C(1, 2) + C(2, 1) - C(2, 2).
    \end{aligned}
\end{equation}
The operator defined in the equation (\ref{T2BI}) is exactly the CHSH Bell operator. Lower bounds for this case are shown in Figs \ref{fig:T2_NPA_P}, \ref{fig:T2_cert_P}, \ref{fig:T2_NPA_delta} and \ref{fig:T2_cert}. 

The reduced witness (\ref{T2DW}) has recently been experimentally realized \cite{M-expdimwit}. The values obtained in this experiment refer to $p = 0.974$ (5.51 in the scaling used there) and $p = 0.984$ (5.56), concluded therein to certify $0.0595$ and $0.082$ bits of randomness, respectively. If the reduction had not been performed, then only $0.0567$ and $0.0305$ would have been certified.

\subsection{modCHSH}
\label{sub:modCHSH}

In \cite{MP13} the following Bell operator is investigated:
\begin{equation}
    \label{modCHSHBI}
    \begin{aligned}
        modCHSH &\equiv C(1, 2) + C(1, 3) \\
        & + C(2, 1) + C(2, 2) - C(2, 3).
    \end{aligned}
\end{equation}
This Bell operator is similar in the form to the dimension witness introduced in \cite{DW2}. Since the relevant Bell inequality is very robust in certifying the randomness, the dimension witness with randomness lower-bounded by it, may also be expected to be robust. Assuming $P(a|x)=\frac{1}{2}$, we turn it into the following dimension witness
\begin{equation}
	\label{modCHSHDWfull}
	W^{\prime}(1, 2) + W^{\prime}(1, 3) + W^{\prime}(2, 1) + W^{\prime}(2, 2) - W^{\prime}(2, 3).
\end{equation}
Since this dimension witness is symmetric, we follow the steps which lead from the expression (\ref{CtoWfull}), to the expression (\ref{CtoW}), to obtain the following reduced dimension witness
\begin{equation}
	\label{modCHSHDW}
	W(1, 2) + W(1, 3)  + W(2, 1) + W(2, 2) - W(2, 3).
\end{equation}
If we start with the dimension witness defined in the equation (\ref{modCHSHDWfull}), and do not use the symmetry, we get the following lower-bounding Bell inequality
\begin{equation}
    \label{modCHSHBIfull}
    \begin{aligned}
        & \frac{1}{2} \left( C(1, 2) + C(1, 3) + C(2, 1) + C(2, 2) - C(2, 3) \right. \\
        & \left. + C(3, 2) + C(3, 3) + C(4, 1) + C(4, 2) - C(4, 3) \right).
    \end{aligned}
\end{equation}

The dimension witness from the equation (\ref{modCHSHDWfull}) lower-bounds the dimension witness from the equation (\ref{modCHSHDW}), and thus both are lower-bounded (in the sense of the theorem \ref{DWtoBItheorem} and the conjecture below it) by the Bell inequality from the equation (\ref{modCHSHBIfull}), but only the second dimension witness is proved to be lower-bounded by $modCHSH$ (see the equation (\ref{modCHSHBI})). Lower-bounds for this set of DI and SDI protocols are shown in Figs \ref{fig:modCHSH_NPA_P}, \ref{fig:modCHSH_cert_P}, \ref{fig:modCHSH_NPA_delta}, and \ref{fig:modCHSH_cert}.

\section{Conclusions}

In this paper we explained in more details the ideas from our previous paper \cite{HWL13}. In particular all steps of the proof of the theorem \ref{DWtoBItheorem} were provided. A tighter bound, using condition $P(a,b|x,y) = P(\neg a, \neg b|x,y)$ in DI scheme, has been introduced. We have presented a new method of dimension witness reduction and a clear distinction between reduced and full dimension witnesses has been made. Reduced dimension witnesses have been shown to be able to certify more randomness. Min-entropies of several protocols, that had not been considered previously in \cite{HWL13}, were evaluated.

Recently a new method that allows to lower-bound the randomness obtained in a SDI scheme directly, using semi-definite programming, has been introduced in \cite{NTV}. However, the complexity of the algorithm from \cite{NTV} increases significantly with the dimension of Hilbert space while in our case the same computation provides a bound for all dimensions. 

It remains an open question, what are the conditions on a dimension witness under that the adversary has no gain in using the mixed strategy rather than P.

\section{Acknowledgments}
SDP was implemented in OCTAVE using SeDuMi \cite{SeDuMi102} toolbox. This work is supported by IDEAS PLUS (IdP2011 000361), NCN grant 2013/08/M/ST2/00626, FNP TEAM and the National Natural Science Foundation of China (Grant No.11304397). The major part of this work has been written in the forests of Sopot.

\end{document}